\documentclass[11pt]{article} 

\usepackage[utf8]{inputenc} 

\usepackage{geometry} 
\geometry{a4paper} 

\usepackage{graphicx} 


\usepackage{amsmath}
\usepackage{amsfonts}
\usepackage{amssymb}
\usepackage{mathrsfs}
\usepackage{amsthm}
\usepackage[symbol]{footmisc}

\newcommand{\C}{\mathbb{C}}

\newcommand{\N}{\mathbb{N}}

\newcommand{\leftset}{\left\{\left.}
\newcommand{\midsetl}{\right\vert\left.}

\newcommand{\rightset}{\right.\right\}}
\newcommand{\style}{\mathcal}
\newcommand{\tr}{\text{Tr}}
\newcommand{\Tr}{\tr}
\newcommand{\Md}{\mathcal{M}_d}
\newcommand{\UCP}{\style{E}}

\newcommand{\Me}{\mathcal{M}_{\mathcal{E}}}
\newcommand{\M}{\mathcal{M}}
\newcommand{\Ep}{\mathcal{E}}

\newtheorem{definition}{Definition}[section]

\newtheorem{lemma}{Lemma}[section]
\newtheorem{cor}{Corollary}[section]
\newtheorem{theorem}{Theorem}[section]
\newtheorem{prop}{Proposition}[section]
\newtheorem{ex}{Example}[section]
\newtheorem{remark}{Remark}[section]
\newtheorem{conjecture}{Conjecture}[section]

\setcounter{footnote}{0}
\newcommand{\footremember}[2]{%
    \footnote{#2}
    \newcounter{#1}
    \setcounter{#1}{\value{footnote}}%
}
\newcommand{\footrecall}[1]{%
    \footnotemark[\value{#1}]%
}


\title{Eventually Entanglement Breaking Maps}
\author{Mizanur Rahaman\footremember{pmath}{Department of Pure Mathematics, University of Waterloo}\footremember{iqc}{Institute for Quantum Computing, University of Waterloo}, Sam Jaques\footrecall{iqc} \footremember{co}{Department of Combinatorics and Optimization, University of Waterloo}, and Vern I. Paulsen\footrecall{pmath} \footrecall{iqc}}

\begin{document}
\maketitle
\begin{abstract}
We analyze linear maps on matrix algebras that become entanglement breaking after composing a finite or infinite number of times with themselves. This means that the Choi matrix of the iterated linear map becomes separable in the tensor product space. 
If a linear map becomes entanglement breaking after finitely many iterations, we say the map has a finite index of separability. In particular we show that every unital PPT-channel has a finite index of separability and that the class of unital channels that have finite index of separability is a dense subset of the unital channels.
 We construct concrete examples of maps which are not PPT but have finite index of separability. We prove that there is a large class of unital channels that are asymptotically entanglement breaking. This analysis is motivated by the PPT-squared conjecture made by M. Christandl that says every PPT channel, when composed with itself, becomes entanglement breaking.
\end{abstract}
\section{Introduction and Motivation}
Detecting separability is one of the key aspects of the theory of quantum information. Computationally it is a hard problem to decide whether a state (a positive semi-definite matrix of trace 1) is separable or not. The quantum channel associated to a state is entanglement breaking if the (Choi)state is separable.

  An important class of linear maps in this context is the class of PPT maps. The Choi matrices of such maps are positive and have a positive partial transpose. These states turn out to be useful in Quantum Key Distribution (QKD) and dense coding protocols (see \cite{B-C-W}).  
It was observed that all known PPT maps, when composed twice with themselves, become entanglement breaking. Hence a conjecture was put forward by M. Christandl in \cite{Banff}
\begin{conjecture}[{\rm{PPT-squared Conjecture}}]\label{conj-PPT}
The composition of two PPT maps is always entanglement breaking.
\end{conjecture}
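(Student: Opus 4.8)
The plan is to attack the statement through the asymptotic/spectral structure of the discrete semigroup generated by a PPT map, and then to argue that for PPT maps this structure collapses quickly enough that a single composition already suffices. First I would record two structural reductions. Since the entanglement breaking maps form a two-sided ideal inside the semigroup of completely positive maps under composition --- if $\Phi$ is entanglement breaking then so are $\Phi\circ\Theta$ and $\Theta\circ\Phi$ for every completely positive $\Theta$ --- it is enough to understand when a non-entanglement-breaking PPT map becomes entanglement breaking after being composed once more with a PPT map. Using that the adjoint $\Phi^*$ and the conjugate $T\circ\Phi\circ T$ of a PPT map are again PPT, I would normalize to the most convenient case (unital, or trace-preserving after rescaling) and first concentrate on the self-composition $\Phi^2$ of a single PPT channel, which already contains the essential difficulty.

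Next, the spectral skeleton. A PPT channel is in particular a quantum channel, so its peripheral spectrum (eigenvalues of modulus one) is governed by the standard structure theory: after restricting to the recurrent subspace, $\Phi$ acts as a trace-preserving $\ast$-automorphism $\alpha$ of a matrix subalgebra $\mathcal N \subseteq M_d(\C)$, that is $\alpha = \mathrm{Ad}_U$ for a unitary $U$, while the complementary transient part is strictly contracting. I would then prove the key rigidity lemma: the co-completely-positive condition defining PPT (namely that $T\circ\Phi$ is also completely positive) is incompatible with a genuinely noncommutative automorphic action on the peripheral part. Concretely, feeding $\mathrm{Ad}_U$ through the partial transpose converts the positivity constraint into one for the anti-automorphism $X\mapsto \overline{U}\, X^{T}\,\overline{U}^{*}$, and demanding that both be completely positive should pin $\mathcal N$ down to an abelian algebra. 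Once $\mathcal N$ is abelian, the peripheral part of $\Phi$ is a classical measure-and-prepare channel, hence entanglement breaking; combining this with the decay of the transient part shows $\Phi$ is asymptotically entanglement breaking --- the comparatively routine part, already asserted for a large class in the abstract.

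The genuinely hard step --- and the reason this is still a conjecture --- is upgrading this asymptotic collapse to the sharp bound that a single composition suffices. I would try to control the transient part quantitatively: writing $\Phi = \Phi_{\mathrm{per}} \oplus \Phi_{\mathrm{tr}}$ and expanding $\Phi^2$, one must show that the cross terms coupling the abelian peripheral block to the contracting block cannot manufacture entanglement, so that the Choi matrix of $\Phi^2$ is \emph{separable outright} rather than merely in the limit. The main obstacle is that there is no dimension-independent handle on this coupling: for $d\le 3$ the constraints reduce to a finite, checkable computation, but in general the separability of $\Phi^2$ is not implied by the separability of $\lim_n \Phi^n$, and bridging that gap is exactly where every known approach stalls. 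Accordingly my concrete goal would be first to establish the rigidity lemma in full generality --- that the peripheral algebra of any PPT map is abelian --- and then to replace the soft asymptotic argument with a quantitative one, for instance by pushing the index-of-separability bound available in the unital case down to at most two.
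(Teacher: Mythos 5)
There is a genuine gap, and it is unavoidable: the statement you were asked to prove is Conjecture \ref{conj-PPT}, which the paper itself does not prove --- it is the open PPT-squared conjecture motivating the work. Your text is not a proof but a research plan, and you concede this yourself when you write that ``the genuinely hard step --- and the reason this is still a conjecture --- is upgrading this asymptotic collapse to the sharp bound that a single composition suffices.'' That admitted step \emph{is} the conjecture; everything before it reconstructs only the partial results that the paper actually establishes, namely that the multiplicative-domain/peripheral algebra of a unital PPT channel is abelian (Theorem \ref{Thm-mult-abel-PPT}), that unital PPT channels are asymptotically entanglement breaking (Theorem \ref{asympt-entgl}), and that every unital PPT channel has some \emph{finite} index of separability (Theorem \ref{thm-PPT-finite-index}) --- with no uniform bound, let alone the bound $2$ (the Discussion section notes explicitly that no uniform bound could be found).

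Beyond the admitted gap, two steps of your plan would fail as stated. First, the reduction to the self-composition $\Phi^2$ of a single map is not justified: the conjecture concerns $\Phi_2\circ\Phi_1$ for two possibly different PPT maps, and all of the machinery you invoke (peripheral spectrum, recurrent/transient splitting, stabilized multiplicative domain) is intrinsically about iterates of one fixed map; the paper's own reduction in Section \ref{sec:PPT2} is of a different nature (to unital, trace-preserving PPT maps with trivial multiplicative domain). Second, your proposed quantitative upgrade ``abelian peripheral algebra plus decaying transient part $\Rightarrow$ separability after finitely many steps'' cannot work at that level of generality: Example \ref{ex2} exhibits a unital channel whose stabilized multiplicative domain is abelian and whose iterates converge to an entanglement breaking map, yet \emph{no} finite power is ever entanglement breaking. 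So abelianness of the peripheral part plus transient decay is strictly weaker than what you need; co-complete positivity must enter the finite-$n$ estimate in an essential way that your sketch never specifies. (Relatedly, the splitting $\Phi=\Phi_{\mathrm{per}}\oplus\Phi_{\mathrm{tr}}$ is not an honest direct sum of channels --- the paper's block decomposition in Theorem \ref{thm:PPT_Minf_decomp} is only achieved after passing to a suitable finite power, which is exactly why the paper gets ``eventually'' rather than ``at the second step.'')
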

Our work is closely related to that of Lami and Giovannetti in \cite{Lami-G}, who studied quantum channels that are ``entanglement saving", i.e., where no power of the map becomes entanglement breaking. They studied such maps in great detail and also provided conditions on when a channel is ``asymptotically entanglement saving". Although they did not study PPT maps in particular, some of our results concerning PPT maps can be deduced from their work. Our proofs use multiplicative domains and are more direct for our particular results.
 Very recently in \cite{K-M-P}, the authors analyzed the composition of PPT maps and proved an asymptotic version of the above conjecture. They showed that for every PPT map $\UCP$, the sequence of distances $d(\UCP^n, \rm{EB})$ goes to zero as $n\to \infty$, where $\rm{EB}$ is the set of entanglement breaking maps. This finding verifies the conjecture asymptotically, that is, $\UCP$ is asymptotically entanglement breaking.
 
 In this paper we look at PPT maps as a particular class of maps which become entanglement breaking after finitely many iterations. If a map $\UCP$ is such that $\UCP^n$ is entanglement breaking for some $n\in \mathbb{N}$, then we say that $\UCP$ has finite index of separability.
 The spectral and multiplicative properties of unital channels are the key ingredients of this investigation. In Section \ref{sec:Structure of PPT Maps}, we provide a structure theorem (Theorem \ref{Thm-mult-abel-PPT}) for a unital PPT map with respect to its action on projections that are mapped to another projection. This reveals that its action can be restricted to independent blocks on the diagonal in a suitable basis.
 In Section \ref{sec:Composition of PPT maps and Finite Index of Separability}, we prove that every unital PPT channel becomes entanglement breaking after finite iterations (see Theorem \ref{thm-PPT-finite-index}). An illustrative example (Example \ref{ex1}) is put forward in Section \ref{sec:Beyond PPT maps} to show that there are maps which are not PPT yet they become eventually entanglement breaking after finite iterations. It is also shown in Theorem \ref{thm-denseness} that the class of such maps which have finite index of separability is dense in the class of unital channels. In Section \ref{sec:Asymptotically Entanglement Breaking}, a large class of channels are shown to be eventually entanglement breaking in asymptotic sense (see Theorem \ref{asympt-entgl}).
\section{Background}
We begin with defining a quantum channel which is a completely positive and trace preserving linear map $\UCP:\Md\rightarrow \Md$. Note that every linear map $\UCP$ on $\Md$
defines a unique element in $\Md\otimes\Md$ which is known as the Choi-matrix of $\UCP$ and it is defined as $C_\UCP=\displaystyle\sum_{i,j}^{d}E_{ij}\otimes\UCP(
E_{ij})$, where $E_{ij}$ are the matrix units in $\Md$. It holds that $\UCP$ is completely positive if and only if $C_{\UCP}$ is positive semi-definite in $\Md\otimes\Md$. For general introduction of completely positive maps and quantum channels we refer to the monographs \cite{paulsen}, \cite{Watrous-book}.

One of the main tools we use to investigate iterative properties of a channel is the multiplicative nature of the channel.
We note some definitions in this regard.
\begin{definition}
For a linear map $\UCP:\Md\rightarrow\Md$, the following set is known as the multiplicative domain of $\UCP$
\[\Me=\{a \in \Md: \UCP(xa)=\UCP(x)\UCP(a), \UCP(ax)=\UCP(a)\UCP(x), \ \forall  x\in \Md\}\]
\end{definition}
\begin{theorem}[\rm{Choi, \cite{choi1}}]
	For a unital completely positive map  $\UCP:\Md\rightarrow \Md$, the multiplicative domain of $\UCP$ is a C$^*$-subalgebra of $\Md$ and equals the following set 
	\[\Me=\{x\in \Md :   \UCP(x^*x)=\UCP(x^*)\UCP(x), \ \UCP(xx^*)=\UCP(x)\UCP(x^*)\}.\]
\end{theorem}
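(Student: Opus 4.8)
The plan is to prove the set equality in both directions and then read off the algebraic structure; throughout I use that a completely positive map is $*$-preserving, so $\UCP(x^*)=\UCP(x)^*$. Write $D$ for the candidate set on the right, $D=\{x\in\Md:\UCP(x^*x)=\UCP(x)^*\UCP(x),\ \UCP(xx^*)=\UCP(x)\UCP(x)^*\}$. The inclusion $\Me\subseteq D$ is immediate: given $a\in\Me$, specialize the two defining identities of the multiplicative domain to $x=a^*$, obtaining $\UCP(a^*a)=\UCP(a^*)\UCP(a)=\UCP(a)^*\UCP(a)$ and, symmetrically, $\UCP(aa^*)=\UCP(a)\UCP(a)^*$.

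The substance lies in the reverse inclusion $D\subseteq\Me$, and this is the step I expect to be the main obstacle, since it must upgrade two equalities of the form $\UCP(a^*a)=\UCP(a)^*\UCP(a)$ into multiplicativity against \emph{every} $x$. The tool is the Kadison--Schwarz inequality combined with a $2\times2$ amplification. Set $\Phi:=\UCP\otimes\mathrm{id}_{M_2}$, which is again unital and completely positive, and put $T=\bigl(\begin{smallmatrix}a&x\\0&0\end{smallmatrix}\bigr)\in M_2(\Md)$. Since $\Phi$ is in particular unital and $2$-positive, the Kadison--Schwarz inequality $\Phi(T^*T)\ge\Phi(T)^*\Phi(T)$ holds; expanding both sides yields the positive operator matrix
\[
\begin{pmatrix}
\UCP(a^*a)-\UCP(a)^*\UCP(a) & \UCP(a^*x)-\UCP(a)^*\UCP(x)\\
\UCP(x^*a)-\UCP(x)^*\UCP(a) & \UCP(x^*x)-\UCP(x)^*\UCP(x)
\end{pmatrix}\ \ge\ 0 .
\]
Now I would invoke the elementary fact that a positive semidefinite operator matrix whose $(1,1)$ block vanishes must have vanishing off-diagonal blocks. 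For $a\in D$ the $(1,1)$ entry is zero, so $\UCP(x^*a)=\UCP(x)^*\UCP(a)$ for every $x$; replacing $x$ by $x^*$ gives $\UCP(xa)=\UCP(x)\UCP(a)$. Running the same argument with $a^*$ in place of $a$ (so that the $(1,1)$ entry becomes $\UCP(aa^*)-\UCP(a)\UCP(a)^*$, again zero for $a\in D$) and taking adjoints produces $\UCP(ax)=\UCP(a)\UCP(x)$. Hence $a\in\Me$, so $\Me=D$. An alternative, perhaps cleaner, route is through a minimal Stinespring dilation $\UCP(\cdot)=V^*\pi(\cdot)V$ with $V$ an isometry: there the equality $\UCP(x^*x)=\UCP(x)^*\UCP(x)$ is precisely the condition $(I-VV^*)\pi(x)V=0$, i.e.\ the intertwining relation $\pi(a)V=V\UCP(a)$, and the two such relations obtained from membership in $D$ give the multiplicativity at once.

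With $\Me=D$ established, the algebraic claims follow quickly. Linearity of $\UCP$ makes $\Me$ a linear subspace; the description by $D$ is manifestly symmetric under $x\mapsto x^*$, so $\Me$ is self-adjoint; and unitality gives $1\in\Me$. Closure under products is checked straight from the definition of the multiplicative domain: for $a,b\in\Me$ and any $x$, associativity gives $\UCP(x\,ab)=\UCP((xa)b)=\UCP(xa)\UCP(b)=\UCP(x)\UCP(a)\UCP(b)=\UCP(x)\UCP(ab)$, with the symmetric computation on the other side, so $ab\in\Me$. As $\Md$ is finite dimensional no separate norm-closedness argument is needed, and we conclude that $\Me$ is a unital C$^*$-subalgebra of $\Md$.
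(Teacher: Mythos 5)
Your proof is correct. There is nothing in the paper to compare it against: the paper quotes this result as Choi's theorem with a citation and gives no proof. That said, your argument is essentially Choi's original one, and its key device --- amplify to a $2\times 2$ block matrix, apply the Schwarz inequality for a unital $2$-positive map, and use the fact that a positive semidefinite block matrix with vanishing $(1,1)$ block has vanishing off-diagonal blocks --- is exactly the device the paper itself deploys later, in the proof of Theorem \ref{Thm-mult-abel-PPT}, to show that a unital PPT map commutes past projections in its multiplicative domain. All the individual steps check out: the easy inclusion $\Me\subseteq D$; the Schwarz step (valid because $\UCP\otimes\mathrm{id}_{M_2}$ is again unital and completely positive, hence $2$-positive); the passage from $\UCP(x^*a)=\UCP(x)^*\UCP(a)$ to the two-sided module relations by substituting $x\mapsto x^*$ and running the argument for $a^*$; the self-adjointness of $\Me$ read off from the symmetry of the two defining conditions under $x\mapsto x^*$; and closure under products directly from the bimodule-type definition. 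Your remark that finite-dimensionality dispenses with norm-closure is fine (boundedness of $\UCP$ would give closedness in general), and the alternative Stinespring route you sketch --- equality in the Schwarz inequality for $a$ being equivalent to the intertwining relation $\pi(a)V=V\UCP(a)$ --- is a complete and standard way to finish as well.
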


 Note that if the maps are assumed to be unital, trace preserving and completely positive, then the fixed point set of $\UCP$ is defined by
 \[\rm{Fix}_{\UCP}=\{a\in \Md: \UCP(a)=a\}.\]
This is an algebra and moreover we have $\rm{Fix}_{\UCP}\subseteq\Me$. 
For any unital channel $\UCP$ and any $k\in\N$, $\mathcal{M}_{\UCP^{k+1}}\subseteq \mathcal{M}_{\UCP^k}$ (See \cite{miza}), and hence there is some $N\in\N$ such that for any $n\geq N$, $\mathcal{M}_{\UCP^n}=\mathcal{M}_{\UCP^N}$. Following \cite{miza}, we denote this algebra $\mathcal{M}_{\UCP^\infty}$ and refer to it as the \textit{stabilized multiplicative domain} of $\UCP$.
\begin{definition}[\cite{miza}]\label{def:mult-index}
	The \textit{multiplicative index} of a unital quantum channel $\UCP$ is the minimum $n\in\N$ such that $\mathcal{M}_{\UCP^n}=\mathcal{M}_{\UCP^\infty}$. In other words it is the length of the following decreasing chain of subalgebras
\[\Me\supseteq \mathcal{M}_{{\UCP}^2}
\supseteq\cdots \supseteq\mathcal{M}_{{\UCP}^{\infty}}.\]
We often denote this number as $\kappa(\UCP)$.
\end{definition}
Another useful result is Lemma 2.2 from \cite{miza}:
\begin{lemma}[\rm{see \ \cite{miza}}]\label{lem:md_composition}
	If $\UCP_1,\UCP_2$ are two unital quantum channels, then
	\[\mathcal{M}_{\UCP_2\circ\UCP_1}=\{x\in \mathcal{M}_{\UCP_1}|\UCP_1(x)\in \mathcal{M}_{\UCP_2}\}.\]
\end{lemma}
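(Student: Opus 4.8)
The statement to prove is Lemma~\ref{lem:md_composition}. The plan is to combine Choi's characterization of the multiplicative domain with the Kadison--Schwarz inequality and, crucially, with the trace preservation of the channels. Throughout I write $\Phi=\UCP_2\circ\UCP_1$, which is again a unital channel, so by Choi's theorem $x\in\mathcal{M}_{\Phi}$ if and only if $\Phi(x^*x)=\Phi(x)^*\Phi(x)$ and $\Phi(xx^*)=\Phi(x)\Phi(x)^*$; the same characterization applies to $\UCP_1$ and $\UCP_2$. I will also use two elementary facts valid for each unital channel $\UCP_i$: the Kadison--Schwarz inequality $\UCP_i(a^*a)\ge\UCP_i(a)^*\UCP_i(a)$, which holds because each $\UCP_i$ is unital and completely positive, and the monotonicity of positive maps on self-adjoint elements.

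The inclusion $\supseteq$ is a direct computation. If $x\in\mathcal{M}_{\UCP_1}$ and $\UCP_1(x)\in\mathcal{M}_{\UCP_2}$, then using that $\UCP_1$ is $*$-preserving we get $\UCP_1(x^*x)=\UCP_1(x)^*\UCP_1(x)$, and applying $\UCP_2$ together with $\UCP_1(x)\in\mathcal{M}_{\UCP_2}$ gives $\Phi(x^*x)=\UCP_2(\UCP_1(x)^*\UCP_1(x))=\Phi(x)^*\Phi(x)$. The identity for $xx^*$ is obtained the same way, so $x\in\mathcal{M}_{\Phi}$.

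For the reverse inclusion $\subseteq$, I would build a sandwich of inequalities and force it to collapse. Let $x\in\mathcal{M}_{\Phi}$. Kadison--Schwarz for $\UCP_1$ gives $\UCP_1(x^*x)\ge\UCP_1(x)^*\UCP_1(x)$; applying the positive map $\UCP_2$ preserves this order, so $\Phi(x^*x)\ge\UCP_2\bigl(\UCP_1(x)^*\UCP_1(x)\bigr)$. Kadison--Schwarz for $\UCP_2$ applied to the element $\UCP_1(x)$ gives $\UCP_2\bigl(\UCP_1(x)^*\UCP_1(x)\bigr)\ge\Phi(x)^*\Phi(x)$. Chaining these and invoking the hypothesis $\Phi(x^*x)=\Phi(x)^*\Phi(x)$ forces both inequalities to be equalities. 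The equality at the second step is exactly (half of) the multiplicative-domain condition placing $\UCP_1(x)$ in $\mathcal{M}_{\UCP_2}$; the equality at the first step says $\UCP_2(D)=0$, where $D=\UCP_1(x^*x)-\UCP_1(x)^*\UCP_1(x)\ge0$.

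The main obstacle---and the step where trace preservation is indispensable---is deducing $D=0$ from $\UCP_2(D)=0$, since a merely positive $\UCP_2$ need not be faithful. Here I would invoke trace preservation: $\Tr(D)=\Tr(\UCP_2(D))=0$, and since $D\ge0$ this forces $D=0$, i.e. $\UCP_1(x^*x)=\UCP_1(x)^*\UCP_1(x)$. Finally, repeating the entire argument with $xx^*$ in place of $x^*x$ yields $\UCP_1(xx^*)=\UCP_1(x)\UCP_1(x)^*$ together with the matching adjoint condition for $\UCP_2$, so that $x\in\mathcal{M}_{\UCP_1}$ and $\UCP_1(x)\in\mathcal{M}_{\UCP_2}$, completing the proof.
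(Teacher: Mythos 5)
Your proposal is correct, and it is essentially the canonical proof of this fact: the paper itself does not prove the lemma (it only cites \cite{miza}), and the argument given there is the same one you found — sandwich $\UCP_2\circ\UCP_1(x^*x)$ between the two Kadison--Schwarz inequalities, collapse the chain using $x\in\mathcal{M}_{\UCP_2\circ\UCP_1}$, and use faithfulness of the trace-preserving map $\UCP_2$ to pass from $\UCP_2(D)=0$ to $D=0$. You correctly identified trace preservation as the indispensable ingredient in the inclusion $\subseteq$; without it $\UCP_2$ need not be faithful and that step would fail.
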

 Next, we note down the definition of separability of a positive matrix.
 \begin{definition}A positive semi-definite matrix 
 $P\in \Md\otimes\mathcal{M}_{n}$ is called separable if there exists positive semi-definite matrices $R_1, \cdots, R_k\in \Md$ and $Q_1,\cdots, R_k\in \mathcal{M}_n$ such that $P=\displaystyle\sum_{i=1}^{k}R_i\otimes Q_{i}$.
 \end{definition}
 Separability of states has a close connection to a specific set of channels which we define below.
 \begin{definition}
 A linear map $\Phi:\Md\rightarrow\mathcal{M}_n$ is called entanglement breaking if the Choi-matrix of $\Phi$, 
 $C_{\Phi}$, is separable in $\Md\otimes\mathcal{M}_n$.
 \end{definition}
 The set of separable states is a compact convex set and hence so is the set of entanglement breaking maps.
There are many different equivalent criteria for a channel to be entanglement breaking. We note down the following facts about entanglement breaking maps.
 \begin{theorem}[\cite{entng-brkng}]\label{Thm-Horodecki}
For a linear map  $\Phi:\Md\rightarrow\mathcal{M}_n$, the following statements are equivalent:
\begin{enumerate}
\item $\Phi$ is entanglement breaking,
\item $\Phi$ has the form $\Phi(x)=\displaystyle \sum_{j=1}^{k}Tr(xR_k)Q_k$, where the $R_k$ are positive matrices with trace 1 in $\Md$ and $Q_1,\cdots Q_k$ are all positive semi-definite matrices in $\mathcal{M}_n$.
\item For any completely positive map $\Psi$, the maps $\Phi\circ\Psi$ and $\Psi\circ\Phi$ are entanglement breaking, whenever the composition is meaningful.
\end{enumerate}
 \end{theorem}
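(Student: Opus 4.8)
The plan is to prove the cyclic chain $(1)\Rightarrow(2)\Rightarrow(3)\Rightarrow(1)$, with essentially all of the content residing in the equivalence $(1)\Leftrightarrow(2)$; the other two implications are short. The organizing tool throughout is the Choi--Jamio{\l}kowski correspondence: I would first record that $\Phi$ can be recovered from its Choi matrix by the partial-trace formula
\[
\Phi(x)=\Tr_1\!\big[(x^T\otimes I)\,C_\Phi\big],
\]
where $\Tr_1$ is the partial trace over the first tensor factor and $x^T$ is the transpose. This is checked directly on matrix units, since the $(i,j)$ block of $C_\Phi$ is by definition $\Phi(E_{ij})$.

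For the core equivalence, the direction $(2)\Rightarrow(1)$ is a direct computation: substituting $\Phi(x)=\sum_j \Tr(xR_j)Q_j$ into the definition of the Choi matrix gives
\[
C_\Phi=\sum_{i,l}E_{il}\otimes\Phi(E_{il})=\sum_{j}\Big(\sum_{i,l}\Tr(E_{il}R_j)\,E_{il}\Big)\otimes Q_j=\sum_j R_j^T\otimes Q_j ,
\]
which is a sum of tensor products of positive matrices (as $R_j\ge 0$ forces $R_j^T\ge 0$), hence separable. For the converse $(1)\Rightarrow(2)$, I would use that any separable positive matrix decomposes, after writing each positive tensor factor as a sum of rank-one positive matrices, as $C_\Phi=\sum_j \ket{\alpha_j}\bra{\alpha_j}\otimes\ket{\beta_j}\bra{\beta_j}$. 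Feeding this into the recovery formula yields $\Phi(x)=\sum_j \bra{\alpha_j}x^T\ket{\alpha_j}\,\ket{\beta_j}\bra{\beta_j}=\sum_j \Tr(xR_j)Q_j$, where $R_j=\overline{\ket{\alpha_j}\bra{\alpha_j}}$ is the entrywise complex conjugate (again positive) and $Q_j=\ket{\beta_j}\bra{\beta_j}$; normalizing each $R_j$ to trace one and absorbing the scalar into $Q_j$ gives exactly the Holevo form of $(2)$.

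The remaining implications are quick. For $(2)\Rightarrow(3)$, let $\Psi$ be completely positive. Post-composition gives $(\Psi\circ\Phi)(x)=\sum_j\Tr(xR_j)\,\Psi(Q_j)$, and since $\Psi$ is positive each $\Psi(Q_j)\ge 0$, so $\Psi\circ\Phi$ is again of the form in $(2)$ and hence entanglement breaking. Pre-composition gives $(\Phi\circ\Psi)(x)=\sum_j\Tr\!\big(x\,\Psi^*(R_j)\big)\,Q_j$, where $\Psi^*$ is the Hilbert--Schmidt adjoint of $\Psi$; this is completely positive, so $\Psi^*(R_j)\ge 0$ and the measure-and-prepare form is preserved. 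Finally $(3)\Rightarrow(1)$ is immediate: taking $\Psi$ to be the identity channel in $(3)$ forces $\Phi$ itself to be entanglement breaking.

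The only genuinely delicate point is the equivalence $(1)\Leftrightarrow(2)$, and within it the bookkeeping of the transpose and complex conjugation relating the ``measure'' operators $R_j$ to the rank-one factors of a separable decomposition of $C_\Phi$; one must also invoke that every positive semidefinite matrix is a finite sum of positive rank-one matrices in order to reduce a general separable decomposition to rank-one tensors. Everything else is formal manipulation of the Choi matrix and of the measure-and-prepare form.
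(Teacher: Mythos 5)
Your proposal is correct, but note that the paper itself offers no proof of this statement: it is quoted as a known background result from the cited reference (Horodecki--Shor--Ruskai), so there is no in-paper argument to compare against. Your argument is essentially the standard one from that literature: the recovery formula $\Phi(x)=\Tr_1\bigl[(x^T\otimes I)\,C_\Phi\bigr]$, the computation $C_\Phi=\sum_j R_j^T\otimes Q_j$ for the Holevo form, and the rank-one refinement of a separable decomposition to pass back from $(1)$ to $(2)$; the transpose/conjugation bookkeeping ($R_j=\overline{\ket{\alpha_j}\bra{\alpha_j}}$, which is again positive) is handled correctly, as is the normalization of the $R_j$ to trace one. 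Two cosmetic remarks: having established the cycle $(1)\Rightarrow(2)\Rightarrow(3)\Rightarrow(1)$, your separate verification of $(2)\Rightarrow(1)$ is logically redundant (though it is a pleasant sanity check on the Choi-matrix computation); and in the pre-composition step of $(2)\Rightarrow(3)$ you should say explicitly that each $\Psi^*(R_j)$ gets renormalized to trace one (absorbing the scalar into $Q_j$, and discarding terms where $\Psi^*(R_j)=0$), exactly as you did in $(1)\Rightarrow(2)$, since condition $(2)$ as stated requires the measurement operators to be density matrices.
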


 \begin{definition}
 A linear map $\UCP:\Md\rightarrow\Md$ is said to have finite index of separability if there exists a $n\in \mathbb{N}$ such that $\UCP^n$ is entanglement breaking.
 \end{definition}
 \begin{remark}
  If a linear map $\UCP_1$ is entanglement breaking, then for any other completely positive map  $\UCP_2$, $\UCP_2\circ\UCP_1$ is entanglement breaking by Theorem \ref{Thm-Horodecki}. Thus, having a finite index of separability means that, after enough repeated applications, a linear map becomes entanglement breaking and stays entanglement breaking. Notice here that this index of separability for channels has been studied before in \cite{Lami1}. We further analyze this property keeping the PPT-squared conjecture in mind. 
\end{remark} 	

\section{Structure of PPT Maps}\label{sec:Structure of PPT Maps}
 In this section we analyze some essential features of PPT maps which eventually help us proving that all PPT maps have finite index of separability.
 \begin{definition}
 A linear map $\Phi:\Md\rightarrow\mathcal{M}_n$ is called PPT if it is completely positive and co-completely positive, that is, $\Phi\circ t$ is completely positive where $``t"$ is the transpose map $x\mapsto x^{t}$.
 \end{definition}
 Note that in the literature of quantum information theory, the name PPT appears with different meanings \cite{PPT-mapping},\cite{PPT-binding}. We will consider the above definition as PPT maps. It turns out that the Choi matrix of a PPT map is `positive partial transpose' in the tensor product $\Md\otimes\mathcal{M}_n$ (see Theorem 7.2.2 in \cite{stormer}).
 
 To decide whether a channel is entanglement breaking or not amounts to deciding whether the Choi matrix is separable, which is computationally 
 quite a difficult task. Note that since every separable state is positive partial transpose 
 \cite{entng-brkng}, it is clear that the set of entanglement breaking channels is a subset of the PPT channels. However, there are PPT channels that are not entanglement breaking. One more criterion for separability in terms of the multiplicative domain is recorded below.
\begin{theorem}[\rm{St\o rmer}, \cite{stormer2008}]
\label{stormer}
Let $\UCP:\Md\rightarrow\Md$ be a unital channel. Then if $\UCP$ is entanglement breaking, then the multiplicative domain $\Me$ is an abelian C$^*$-
algebra. The converse is also true provided $\UCP$ is a conditional expectation, that is, $\UCP^2=\UCP$.
\end{theorem}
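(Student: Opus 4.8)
The plan is to treat the two implications separately, using the measure-and-prepare description of entanglement breaking maps from Theorem \ref{Thm-Horodecki} together with the Kadison--Schwarz inequality. For the forward direction, suppose $\UCP$ is entanglement breaking. By Theorem \ref{Thm-Horodecki} I may write $\UCP(x)=\sum_{j=1}^k \Tr(xR_j)Q_j$ with each $R_j\ge 0$ of trace $1$ and each $Q_j\ge 0$, and after discarding the terms with $Q_j=0$ I assume every $Q_j\neq 0$. I would then factor $\UCP=\mathcal{B}\circ\mathcal{A}$ through the diagonal (abelian) subalgebra $\mathcal{D}=\Span\{E_{jj}\}\subseteq \mathcal{M}_k$, where $\mathcal{A}(x)=\sum_j \Tr(xR_j)E_{jj}$ and $\mathcal{B}(E_{jj})=Q_j$. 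Because $\Tr R_j=1$, the map $\mathcal{A}$ is unital, and unitality of $\UCP$ forces $\sum_j Q_j=I$, so $\mathcal{B}$ is unital as well; both are completely positive. The point of the factorization is that $\mathcal{A}$ lands in an abelian algebra, so it suffices to show that $\Me$ is contained in the multiplicative domain of $\mathcal{A}$.

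First I would verify that $\mathcal{A}$ is faithful on positive elements. Trace preservation of $\UCP$ gives $\sum_j \Tr(Q_j)R_j=I$, and since $\Tr(Q_j)\le \max_i\Tr(Q_i)=:C$ with $C>0$, this yields $\sum_j R_j\ge C^{-1}I$; in particular $\sum_j R_j$ is invertible. Hence if $a\ge 0$ and $\mathcal{A}(a)=0$ then $\Tr(aR_j)=0$ for all $j$, so $aR_j=0$ and $a\sum_j R_j=0$, forcing $a=0$. Now take $a\in\Me$ and set $D=\mathcal{A}(a^*a)-\mathcal{A}(a)^*\mathcal{A}(a)$, which lies in $\mathcal{D}$ and is positive by the Kadison--Schwarz inequality for the unital completely positive map $\mathcal{A}$. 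Applying $\mathcal{B}$ and writing $y=\mathcal{A}(a)$, I compute $\mathcal{B}(D)=\UCP(a^*a)-\mathcal{B}(y^*y)=\UCP(a)^*\UCP(a)-\mathcal{B}(y^*y)=\mathcal{B}(y)^*\mathcal{B}(y)-\mathcal{B}(y^*y)\le 0$, using $a\in\Me$ and Kadison--Schwarz for the unital map $\mathcal{B}$. Since $\mathcal{B}$ is positive, $\mathcal{B}(D)=0$, and since $\mathcal{B}$ is faithful on $\mathcal{D}$ (all $Q_j\neq 0$) while $D\in\mathcal{D}$ is positive, $D=0$. The same argument applied to $aa^*$ shows $a\in\mathcal{M}_{\mathcal{A}}$, so $\Me\subseteq \mathcal{M}_{\mathcal{A}}$. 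Finally, $\mathcal{A}$ restricted to its multiplicative domain is a $\ast$-homomorphism into the abelian algebra $\mathcal{D}$, injective because $\mathcal{A}$ is faithful; hence $\mathcal{M}_{\mathcal{A}}$, and therefore $\Me$, is abelian.

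For the converse, assume $\UCP^2=\UCP$ and that $\Me$ is abelian. Being a unital completely positive idempotent, $\UCP$ is, by Tomiyama's theorem, the conditional expectation onto its range, which coincides with $\mathrm{Fix}_{\UCP}\subseteq\Me$ and is therefore an abelian $\ast$-subalgebra $\mathcal{N}$ of $\Md$. Let $p_1,\dots,p_m$ be the minimal projections of $\mathcal{N}$, so they are mutually orthogonal with $\sum_i p_i=I$. Since $\UCP$ is the unique trace-preserving conditional expectation onto $\mathcal{N}$, it must be given by $\UCP(x)=\sum_i \tfrac{\Tr(xp_i)}{\Tr(p_i)}\,p_i$; this follows from the facts that $\UCP(x)\in\mathcal{N}$ is a combination of the $p_i$, that $\UCP$ is an $\mathcal{N}$-bimodule map, and that it preserves the trace. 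Reading off $R_i=p_i/\Tr(p_i)$ and $Q_i=p_i$ exhibits $\UCP$ in the form of Theorem \ref{Thm-Horodecki}(2), so $\UCP$ is entanglement breaking.

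The main obstacle is the forward direction, specifically the passage $\Me\subseteq\mathcal{M}_{\mathcal{A}}$: the Schwarz inequalities for $\mathcal{A}$ and $\mathcal{B}$ only give one-sided estimates in opposite directions, and it is the faithfulness of $\mathcal{A}$ and $\mathcal{B}$---which in turn rests on combining unitality and trace preservation to force $\sum_j R_j$ to be invertible---that lets me cancel $\mathcal{B}$ and conclude exact multiplicativity of $\mathcal{A}$ on $\Me$.
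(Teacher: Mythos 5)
Your proof is correct, so let me compare it with how the paper handles this statement. The paper does not actually prove Theorem \ref{stormer} at all: it quotes it from St\o rmer's paper, and then proves the stronger Theorems \ref{Thm-mult-abel-PPT} and \ref{PPT-multi-dom}, which contain Theorem \ref{stormer} as a special case because every entanglement breaking channel is PPT. The two routes to the forward implication are genuinely different. You use the Holevo measure-and-prepare form from Theorem \ref{Thm-Horodecki} to factor $\UCP=\mathcal{B}\circ\mathcal{A}$ through an abelian algebra, then combine Kadison--Schwarz for $\mathcal{A}$ and for $\mathcal{B}$ with faithfulness (extracted correctly from unitality plus trace preservation, which force $\sum_j R_j$ invertible and all $Q_j\neq 0$) to conclude $\Me\subseteq\mathcal{M}_{\mathcal{A}}$ and hence abelianness; this is clean and elementary, but it is intrinsically tied to the entanglement breaking hypothesis, since it needs the decomposition $\UCP(x)=\sum_j\Tr(xR_j)Q_j$. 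The paper's argument instead uses only co-complete positivity: a $2\times 2$ Schwarz-inequality computation for $\UCP\circ t$ shows that $\UCP(p)$ commutes with $\UCP(x)$ for every projection $p\in\Me$ and every $x\in\Md$, which gives abelianness of $\Me$ for \emph{all} unital PPT channels, a strictly larger class; your argument cannot be adapted to that generality. For the converse, both proofs reduce to showing $\mathrm{Range}(\UCP)\subseteq\Me$: the paper does this via a Schwarz-plus-trace-preservation argument and then simply invokes Lemma \ref{abelian-range}, whereas you identify $\mathrm{Range}(\UCP)=\mathrm{Fix}_{\UCP}\subseteq\Me$ and re-derive the explicit formula $\UCP(x)=\sum_i \Tr(xp_i)p_i/\Tr(p_i)$, in effect reproving Lemma \ref{abelian-range} in the special case of conditional expectations onto abelian algebras; that is correct but more laborious than necessary, since once you know the range lies in an abelian algebra you could cite that lemma and stop.
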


Next we extend the results of St\o rmer's mentioned in Theorem \ref{stormer}. Before that we prove a lemma that will be useful in the subsequent discussion. This result was given in \cite{J-K-P-P} but for the sake of completeness we outline a proof. See also Corollary 3 in \cite{stormer2008}.
\begin{lemma}\label{abelian-range}
Let $\UCP:\Md\rightarrow\Md$ be unital completely positive map such that the $\rm{Range}(\UCP)$ is contained in an abelian C$^*-algebra$, then $\UCP$ is entanglement breaking.
\end{lemma}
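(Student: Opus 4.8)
The plan is to deduce the conclusion directly from the Horodecki characterisation of entanglement breaking maps, namely part (2) of Theorem \ref{Thm-Horodecki}: it suffices to exhibit $\UCP$ in the form $\UCP(x) = \sum_j \Tr(x R_j) Q_j$ with each $R_j$ a positive trace-one matrix and each $Q_j$ positive semi-definite. First I would fix a unital abelian C$^*$-subalgebra $\mathcal{A} \subseteq \Md$ containing $\mathrm{Range}(\UCP)$; unitality of $\UCP$ guarantees $I = \UCP(I) \in \mathcal{A}$, so $\mathcal{A}$ is indeed unital. Every unital abelian C$^*$-subalgebra of $\Md$ is spanned by a family of minimal projections $P_1, \dots, P_k$ that are pairwise orthogonal and satisfy $\sum_{i=1}^k P_i = I$. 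Because $\UCP(x) \in \mathcal{A}$ for every $x$, we may write $\UCP(x) = \sum_{i=1}^k \lambda_i(x) P_i$ for scalars $\lambda_i(x)$ depending linearly on $x$.

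The next step is to identify the coefficient functionals $\lambda_i$. Using the orthogonality relations $P_i P_j = \delta_{ij} P_i$ and taking traces gives $\Tr(P_i \UCP(x)) = \lambda_i(x)\Tr(P_i)$, hence $\lambda_i(x) = \Tr(P_i \UCP(x))/\Tr(P_i)$. I would then pass to the trace-dual map $\UCP^\dagger$, defined by $\Tr(\UCP^\dagger(a)\, x) = \Tr(a\, \UCP(x))$ for all $a, x \in \Md$. Since $\UCP$ is completely positive, so is $\UCP^\dagger$, and since $\UCP$ is unital, $\UCP^\dagger$ is trace preserving. Setting $R_i := \UCP^\dagger(P_i)/\Tr(P_i)$, positivity of $P_i$ makes each $R_i$ positive, and trace preservation gives $\Tr(R_i) = \Tr(\UCP^\dagger(P_i))/\Tr(P_i) = \Tr(P_i)/\Tr(P_i) = 1$. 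Moreover $\lambda_i(x) = \Tr(\UCP^\dagger(P_i) x)/\Tr(P_i) = \Tr(R_i x)$, so that $\UCP(x) = \sum_{i=1}^k \Tr(R_i x)\, P_i$. This is exactly the Horodecki form with $Q_i = P_i$, and the lemma follows.

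The argument is essentially structural, so there is no serious analytic obstacle; the only points requiring care are the two places where unitality is used. First, unitality ensures that the abelian algebra containing the range may be taken to be unital, so that its minimal projections resolve the identity and the expansion $\UCP(x) = \sum_i \lambda_i(x) P_i$ covers all of $\mathcal{A}$. Second, unitality is precisely what forces the dual map to be trace preserving, which is what normalises the $R_i$ to have trace one rather than merely being positive. Without this normalisation one would still obtain a measure-and-prepare form, but it is worth verifying the trace-one condition explicitly to match the statement of Theorem \ref{Thm-Horodecki} verbatim.
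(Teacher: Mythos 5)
Your proof is correct and follows essentially the same route as the paper: both decompose the abelian range algebra (via its minimal projections, or equivalently its identification with $l^\infty_k$), recognize the coefficient functionals as positive functionals of the form $x\mapsto \Tr(R_i x)$, and conclude via the measure-and-prepare form in Theorem \ref{Thm-Horodecki}. Your use of the adjoint map $\UCP^\dagger$ to construct the $R_i$ and verify $\Tr(R_i)=1$ is just a more explicit rendering of the paper's appeal to the representation of unital positive functionals by density matrices.
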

\begin{proof}
Let the (finite dimensional) abelian C$^*$-algebra be $l^{\infty}_{k}$ for some $k\in \mathbb{N}$. Hence we can regard $\UCP(x)=(\UCP_1(x),\cdots, \UCP_k(x))$ for every $x\in \Md$. Now for each $1\leq i\leq k$, define the co-ordinate projections $p_{i}:l^{\infty}_k\rightarrow\mathbb{C}$, that is, $p_{i}(x_1,\cdots,x_k)=x_i$. These are positive linear functionals on $l^{\infty}_k$. Hence composing with $\UCP$, we get $p_i\circ\UCP:\Md\rightarrow\mathbb{C}$ which is a unital positive functional on $\Md$. Since every such functional is given by $x\rightarrow \Tr(xa_i)$ for some density matrix $a_i$, we have $\UCP_i(x)=Tr(xa_i)$. Now choose $k$ many positive elements $\{b_1,\cdots b_k\}$ in $\Md$ to associate each coordinate element of $l^{\infty}_k$ to $\Md$ and eventually we get 
\[\UCP(x)=\displaystyle\sum_{j=1}^{k}\Tr(xa_j)b_j, \ \forall x\in \Md,\]
which is entanglement breaking.
\end{proof}
Now we state and prove the main theorem of this section.
\begin{theorem}\label{Thm-mult-abel-PPT}
Let $\UCP:\Md\rightarrow\Md$ a unital PPT map. Let $p\in \Me$ be projection and let $q=1-p$. Then for every $x\in \Md$ we have 
\[\UCP(x)=\UCP(pxp)+\UCP(qxq)=\UCP(p)\UCP(x)\UCP(p)
+\UCP(q)\UCP(x)\UCP(q).\]
It follows that $\UCP(\Me)$ is an abelian C$^*$-algebra contained in the center of the C$^*$-algebra generated by the $\rm{Range}(\UCP)$. If moreover, $\UCP$ is faithful, then $\Me$ is abelian.  
\end{theorem}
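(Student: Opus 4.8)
The plan is to reduce the first displayed identity to the single statement that $\UCP$ annihilates the off-diagonal corners determined by $p$, and to extract that vanishing from the \emph{co}-completely positive half of the PPT hypothesis. First I would record the easy structural facts: since $\Me$ is a unital C$^*$-algebra (Choi's theorem) containing the projection $p$, it also contains $q=1-p$, and because $p,q$ lie in the multiplicative domain, $P:=\UCP(p)$ and $Q:=\UCP(q)$ are complementary projections (indeed $\UCP(p)=\UCP(p^2)=\UCP(p)^2$ is self-adjoint, and $Q=\UCP(1)-P=1-P$ by unitality). Writing $x=pxp+pxq+qxp+qxq$ and using $p,q\in\Me$ to pull the factors through $\UCP$, both claimed equalities reduce to the single assertion $\UCP(pxq)=0$ for all $x$: the term $\UCP(qxp)$ then vanishes by taking adjoints, while $\UCP(pxp)=P\UCP(x)P$ and $\UCP(qxq)=Q\UCP(x)Q$ are immediate from the multiplicative-domain identities.

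The crux is proving $\UCP(pxq)=0$, and this is exactly where co-complete positivity enters. From the fact that $\UCP\circ t$ is completely positive and unital, the Kadison--Schwarz inequality applied to $\UCP\circ t$ and rewritten via the substitution $z=y^{t}$ yields the ``co-Schwarz'' inequality
\[
\UCP(zz^*)\ \ge\ \UCP(z)^*\UCP(z)\qquad\text{for all }z\in\Md,
\]
which should be contrasted with the ordinary Schwarz inequality $\UCP(z^*z)\ge\UCP(z)^*\UCP(z)$ coming from plain complete positivity. Now set $a=pxq$. Using $p,q\in\Me$, the left-hand side $\UCP(aa^*)$ lies in the corner $P(\cdot)P$, whereas the right-hand side $\UCP(a)^*\UCP(a)$ lies in the \emph{orthogonal} corner $Q(\cdot)Q$, because $\UCP(a)=P\UCP(x)Q$ sits in the $(P,Q)$-block. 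An operator inequality $X\ge Y$ with $X=PXP\ge 0$ and $Y=QYQ\ge 0$ supported on orthogonal corners forces $Y=0$: testing against a vector $v$ in the range of $Q$ gives $\langle v,Xv\rangle=0$ and hence $\langle v,Yv\rangle\le 0$, so $Y=0$ and $\UCP(a)=0$. This is the main obstacle, and it is precisely the step that fails for a merely completely positive map (for a unitary conjugation the off-diagonal corners are not killed); the PPT hypothesis is used essentially to place $\UCP(a)^*\UCP(a)$ in the ``wrong'' corner.

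For the consequences I would argue as follows. The identity $\UCP(x)=P\UCP(x)P+Q\UCP(x)Q$ says exactly that $P=\UCP(p)$ commutes with $\UCP(x)$ for every $x$, so each $\UCP(p)$ with $p$ a projection in $\Me$ is central in the C$^*$-algebra generated by $\mathrm{Range}(\UCP)$. Since $\Me$ is a finite-dimensional C$^*$-algebra it is the linear span of its projections, and $\UCP$ is linear, so $\UCP(\Me)$ is contained in that center and is therefore abelian; it is itself a C$^*$-algebra because $\UCP|_{\Me}$ is a $*$-homomorphism (multiplicativity on $\Me$ together with $*$-preservation). Finally, if $\UCP$ is faithful then $\UCP|_{\Me}$ is injective: for $a\in\Me$ with $\UCP(a)=0$ we get $\UCP(a^*a)=\UCP(a)^*\UCP(a)=0$, whence $a^*a=0$ and $a=0$, so $\Me\cong\UCP(\Me)$ is abelian. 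I expect the only delicate point outside the corner argument to be the bookkeeping of the multiplicative-domain identities, which is routine.
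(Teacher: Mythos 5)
Your proof is correct: the co-Schwarz inequality $\UCP(zz^*)\geq\UCP(z)^*\UCP(z)$ is a valid consequence of the Kadison--Schwarz inequality for the unital completely positive map $\UCP\circ t$ (with $z=y^t$ one has $(y^*y)^t=y^t\bar{y}=zz^*$), the orthogonal-corner argument does force $\UCP(pxq)=0$, and the remaining deductions (the decomposition, centrality of each $\UCP(p)$, abelianness of $\UCP(\Me)$ via the span of projections, and injectivity of $\UCP|_{\Me}$ in the faithful case) are all sound. Your route is, however, organized differently from the paper's, even though both ultimately rest on a Schwarz inequality for $\UCP\circ t$. The paper uses the \emph{amplified} ($2\times 2$) Schwarz inequality, applied to the block matrix with first column $(p^t,x)$ and second column zero; since $p\in\Me$ is a projection, the $(1,1)$ entries of the two sides of the resulting operator inequality coincide, and the standard fact that a positive block operator with vanishing $(1,1)$ corner has vanishing off-diagonal corners yields the one-sided relation $\UCP(xp)=\UCP(p)\UCP(x)$. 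Combined with the multiplicative-domain identity $\UCP(xp)=\UCP(x)\UCP(p)$, this gives commutation of $\UCP(p)$ with all of $\mathrm{Range}(\UCP)$ \emph{first}, and only then do the cross terms $\UCP(pxq)=\UCP(p)\UCP(x)\UCP(q)=\UCP(x)\UCP(p)\UCP(q)=0$ disappear. You invert this order: the un-amplified co-Schwarz inequality plus corner-disjointness kills $\UCP(pxq)$ directly, and commutation falls out of the resulting decomposition. The two arguments have the same strength and isolate the same essential role of co-complete positivity --- your remark that the transpose places $\UCP(a)^*\UCP(a)$ in the corner orthogonal to the one carrying $\UCP(aa^*)$ is precisely why plain complete positivity cannot suffice --- but yours dispenses with the $2\times2$ bookkeeping, and it turns Corollary \ref{lemma:coCP_md_struct} of the paper (that $pxq=x$ with $p,q$ orthogonal projections in $\Me$ forces $\UCP(x)=0$) into the intermediate step of the proof rather than a downstream consequence.
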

\begin{proof}
If $\Me$ contains no projection then our conclusion follows trivially. So assume there exists a projection $p\in \Me$. So $\Ep(p)$ is a projection (that is $\Ep(p)^2=\Ep(p)$) and for all $x\in \Md$ we have $\Ep(xp)=\Ep(x)\Ep(p)$.
For any $x\in \Md$, define
\[X=\left[\begin{array}{cc}
p^t & 0 \\
x & 0 \\
\end{array}\right], \ 
\text{then} \ X^*=\left[\begin{array}{cc}
\bar{p} & x^* \\
0 & 0 \\
\end{array}\right].\]  

Hence $XX^*=\left[\begin{array}{cc}
p^t\bar{p} & p^{t}x^* \\
x\bar{p} & xx^* \\
\end{array}\right].$\\
By  2-positivity  of $\UCP\circ t$ we have by Schwarz inequality
\[\left[\begin{array}{cc}
\UCP(p) & 0 \\
\UCP(x^t) & 0 \\
\end{array}\right]\left[\begin{array}{cc}
\UCP(p) & \UCP(\bar{x}) \\
0 & 0 \\
\end{array}\right]\leq \left[\begin{array}{cc}
\UCP\circ t(p^{t}\bar{p}) & \UCP\circ t(p^{t}{x}^*) \\
\UCP\circ t(x\bar{p}) & \UCP\circ t({x}x^*) \\
\end{array}\right]=\left[\begin{array}{cc}
\UCP(pp) & \UCP(\bar{x}p) \\
\UCP(px^t) & \UCP(\bar{x}{x^t}) \\
\end{array}\right]\]
This implies 
\[\left[\begin{array}{cc}
\UCP(p) & \UCP(p)\UCP(\bar{x}) \\
\UCP(x^t)\UCP(p) & \UCP(x^t)\UCP(\bar{x}) \\
\end{array}\right]\leq \left[\begin{array}{cc}
\UCP(p) & \UCP(\bar{x}p) \\
\UCP(px^t) & \UCP(\bar{x}x^t) \\
\end{array}\right].\]
This yields
\[\left[\begin{array}{cc}
0 & \UCP(\bar{x}p)-\UCP(p)\UCP(\bar{x}) \\
\UCP(px^t)-\UCP(x^t)\UCP(p) & \UCP(\bar{x}x^t)-\UCP(x^t)\UCP(\bar{x}) \\
\end{array}\right]\geq 0.\]

If the $(1,1)$ entry of a positive is zero, then the $(1,2)$ entry will be zero. Hence
$\UCP(\bar{x}p)=\UCP(p)\UCP(\bar{x})$. Since $x$ was arbitrary, replacing $x$ by $\bar{x}$, we get 
\[\UCP(xp)=\UCP(p)\UCP(x).\]
But we already have $\UCP(xp)=\UCP(x)\UCP(p)$ from the multiplicative domain property. Hence we obtain for every $x\in \Md$ and any projection $p\in \Me$, 
\[\UCP(x)\UCP(p)=\UCP(p)\UCP(x).\]
So it follows that for every $x \in \Md$, we have two projections $p,q=(1-p)\in \Me$ ($p\perp q$) such that 
\begin{align*}
\UCP(x)&=\UCP((p+q)x(p+q))\\&=\UCP(pxp)+\UCP(qxp)+\UCP(pxq)
+\UCP(qxp)\\
&=\UCP(p)\UCP(x)\UCP(p)+\UCP(q)\UCP(x)\UCP(q).
\end{align*}
The other terms vanishes as $\UCP(p)\perp\UCP(q)$ and the above commutation property. It also shows that $\UCP(\Me)$ is abelian and it is contained in the center of the algebra generated by $\rm{Range}(\UCP)$.
Now if $\UCP$ is faithful, restricting $\UCP$ on $\Me$ we get an isomorphism between $\Me$ and $\UCP(\Me)$. As $\UCP(\Me)$ is abelian, it follows that $\Me$ is abelian.
\end{proof}
We get an immediate useful corollary to the above theorem.
\begin{cor}\label{lemma:coCP_md_struct}
	Let $\UCP:\Md\rightarrow \Md$ be a unital PPT map. If $p,q$ are two orthogonal projections in the multiplicative domain $\Me$ and $x\in \Md$ such that $pxq=x$, then $\UCP(x)=0$.
\end{cor}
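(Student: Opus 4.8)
The plan is to prove this by a short direct computation, using the multiplicative domain property together with the commutation relation established inside the proof of Theorem~\ref{Thm-mult-abel-PPT}; no new estimates are needed.

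First I would use the hypothesis $x = pxq$ to write $\UCP(x) = \UCP(pxq)$, and then peel off the projections $p$ and $q$ using that they lie in the multiplicative domain $\Me$. Writing $pxq = p(xq)$ and applying the left factorization $\UCP(pa) = \UCP(p)\UCP(a)$ (valid since $p \in \Me$), then writing $xq$ and applying the right factorization $\UCP(aq) = \UCP(a)\UCP(q)$ (valid since $q \in \Me$), I obtain
\[
\UCP(x) = \UCP(pxq) = \UCP(p)\,\UCP(x)\,\UCP(q).
\]

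Second, I would invoke the commutation relation proved in the course of Theorem~\ref{Thm-mult-abel-PPT}, namely that $\UCP(y)\UCP(r) = \UCP(r)\UCP(y)$ for every $y \in \Md$ and every projection $r \in \Me$. Applying it with $r = q$ gives $\UCP(x)\UCP(q) = \UCP(q)\UCP(x)$, so that
\[
\UCP(x) = \UCP(p)\,\UCP(x)\,\UCP(q) = \UCP(p)\,\UCP(q)\,\UCP(x).
\]
Finally, since $p$ and $q$ are orthogonal and both lie in $\Me$, multiplicativity gives $\UCP(p)\UCP(q) = \UCP(pq) = \UCP(0) = 0$, whence $\UCP(x) = 0$.

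I do not expect a genuine obstacle here; the only point that requires care is bookkeeping: recognizing that the right tool is the commutation identity $\UCP(p)\UCP(y) = \UCP(y)\UCP(p)$ extracted from the theorem's proof (rather than the diagonal decomposition statement itself), and that $\UCP(p)\perp\UCP(q)$ follows from $pq = 0$ via the multiplicative domain property. Alternatively, one could commute on the left using $r = p$ to reach $\UCP(x)\UCP(p)\UCP(q)$ and conclude in the same way; both routes collapse to the orthogonality $\UCP(p)\UCP(q)=0$.
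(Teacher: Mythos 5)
Your proof is correct and follows essentially the same route as the paper: factor $\UCP(pxq)=\UCP(p)\UCP(x)\UCP(q)$ via the multiplicative domain, invoke the commutation relation $\UCP(y)\UCP(r)=\UCP(r)\UCP(y)$ established in the proof of Theorem~\ref{Thm-mult-abel-PPT}, and conclude from $\UCP(p)\UCP(q)=\UCP(pq)=0$. The only cosmetic difference is that the paper commutes $\UCP(p)$ past $\UCP(x)$ while you commute $\UCP(q)$, which changes nothing.
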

\begin{proof}
If $x=pxq$, then from the previous theorem we get $\UCP(x)=\UCP(p)\UCP(x)\UCP(q)=\UCP(x)\UCP(p)\UCP(q)=0$.
\end{proof}
This result shows that for every PPT map, there is some basis where its action can be restricted to independent blocks on the diagonal.\par

Next we extend the results of St\o rmar given in Theorem \ref{stormer} for PPT maps.
\begin{theorem}\label{PPT-multi-dom}
If $\UCP:\Md\rightarrow\Md$ is a unital PPT channel, then $\Me$ is abelian and the converse is also true if $\UCP$ is a conditional expectation.
\end{theorem}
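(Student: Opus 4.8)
The plan is to treat the two assertions separately: the forward implication (a unital PPT channel has abelian multiplicative domain) will be obtained by upgrading Theorem \ref{Thm-mult-abel-PPT} from the faithful case to the general case using trace-preservation, while the converse will follow almost immediately from St\o rmer's Theorem \ref{stormer} together with the inclusion of entanglement breaking maps in the PPT maps.

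For the forward direction, Theorem \ref{Thm-mult-abel-PPT} already gives that $\UCP(\Me)$ is an abelian C$^*$-algebra and that $\Me$ is abelian whenever $\UCP$ is faithful; so the only thing to supply is a substitute for faithfulness valid for every channel. First I would record that the restriction $\UCP|_{\Me}$ is a unital $*$-homomorphism onto $\UCP(\Me)$: on the multiplicative domain $\UCP$ is multiplicative by definition, and being unital and positive it preserves adjoints. Hence $\ker(\UCP|_{\Me})$ is a two-sided ideal of the C$^*$-algebra $\Me$, and it suffices to show it is trivial. For this I would use trace-preservation: if $a=a^*\in\Me$ and $\UCP(a)=0$, then $a^2\in\Me$ and $\UCP(a^2)=\UCP(a)^2=0$, so $\Tr(a^2)=\Tr(\UCP(a^2))=0$, which forces $a=0$ since $a^2\geq 0$. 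Splitting a general kernel element into its self-adjoint real and imaginary parts (each again in the kernel, as $\UCP$ is $*$-preserving) extends this to all of $\ker(\UCP|_{\Me})$. Therefore $\UCP|_{\Me}$ is a $*$-isomorphism of $\Me$ onto the abelian algebra $\UCP(\Me)$, and $\Me$ is abelian.

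For the converse, suppose $\UCP$ is a conditional expectation, i.e. $\UCP^2=\UCP$, and that $\Me$ is abelian. St\o rmer's Theorem \ref{stormer} asserts that for a conditional expectation the abelianness of $\Me$ is enough to conclude that $\UCP$ is entanglement breaking. Since every separable state has positive partial transpose, every entanglement breaking map is PPT, and so $\UCP$ is PPT; this closes the loop.

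The only genuine obstacle I anticipate is the injectivity step in the forward direction, namely isolating trace-preservation as exactly the hypothesis that replaces faithfulness in Theorem \ref{Thm-mult-abel-PPT}. Once $\UCP|_{\Me}$ is known to be an injective $*$-homomorphism, transporting abelianness from $\UCP(\Me)$ back to $\Me$ is automatic, and the converse is a direct appeal to already-cited results.
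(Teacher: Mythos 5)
Your proposal is correct, and its two halves compare differently with the paper. The forward direction is essentially the paper's argument: the paper disposes of it in one line ("trace preservation implies faithfulness") and then invokes the last clause of Theorem \ref{Thm-mult-abel-PPT}; your injectivity argument ($a=a^*\in\Me$, $\UCP(a)=0$ gives $\UCP(a^2)=\UCP(a)^2=0$, hence $\Tr(a^2)=0$ and $a=0$) is exactly the content of that one-liner spelled out, since the isomorphism $\Me\cong\UCP(\Me)$ is already how the paper's proof of Theorem \ref{Thm-mult-abel-PPT} concludes. Where you genuinely diverge is the converse: you simply cite St\o rmer's Theorem \ref{stormer} (conditional expectation with abelian $\Me$ is entanglement breaking) and then use the standard inclusion $\mathrm{EB}\subseteq\mathrm{PPT}$, whereas the paper re-proves St\o rmer's converse from scratch. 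Concretely, the paper shows that idempotence, the Schwarz inequality, and trace preservation force $\mathrm{Range}(\UCP)\subseteq\Me$ (if $\UCP(a)=a$ then $\UCP(aa^*)-aa^*$ is a positive element killed by the trace-preserving map $\UCP$, hence zero, giving equality in Schwarz), and then applies Lemma \ref{abelian-range}. Your route is shorter and perfectly legitimate, since Theorem \ref{stormer} is stated in the paper as known background; what the paper's route buys is self-containment --- it derives the St\o rmer converse from the abelian-range lemma that is reused throughout the paper (e.g.\ in Example \ref{ex1} and Theorem \ref{asympt-entgl}) --- plus the intermediate fact, of independent interest, that the range of an idempotent unital channel sits inside its multiplicative domain.
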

\begin{proof}
The first part is essentially Theorem \ref{Thm-mult-abel-PPT} as the trace preservation property implies faithfulness. 

Conversely, suppose $\UCP$ is a channel such that $\UCP^2=\UCP$ and the multiplicative domain $\Me$ is abelian. Then we have that $\rm{Range}(\UCP)$ is contained in $\Me$. Indeed, for any $a\in \rm{Range}(\UCP)$ we have 
\[\UCP(\UCP(aa^*)-aa^*)=0.\]
But $\UCP(aa^*)-aa^*\geq 0$ by the Schwarz inequality of $\UCP$. Now by the trace preservation of $\UCP$, from the above equation we get $\UCP(aa^*)=aa^*=\UCP(a)\UCP(a^*)$, which is the equality in the Schwarz inequality and hence $a\in \Me$.
As $\rm{Range}(\UCP)$ is contained in an abelian C$^*$-algebra, by Lemma \ref{abelian-range} we get $\UCP$ is entanglement breaking. Since an entanglement breaking map is automatically PPT, we have the result.

\end{proof} 
The composition of PPT maps will have a similar structure, with independent projections, but the projections may not ``line up'', and so we need to analyze the composition of PPT maps in  more detail. 

\section{Composition of PPT maps and Finite Index of Separability}\label{sec:Composition of PPT maps and Finite Index of Separability}
In this section we prove one of the main results of the paper, that is, every unital PPT map becomes entanglement breaking after a finite number of iterations.

\begin{theorem}\label{thm:PPT_Minf_decomp}
Let $\UCP:\Md\rightarrow\Md$ be a unital PPT channel and let $p_1,\cdots,p_k$ be the set of minimum mutually orthogonal projections $\M_{\UCP^\infty}$. Then there is some natural number $n$ such that
\[\UCP^n=\UCP_1\oplus\cdots\oplus \UCP_k\]
where $\UCP_i:p_i\Md p_i\rightarrow p_i\Md p_i$ is a unital PPT channel with trivial multiplicative domain. 
\end{theorem}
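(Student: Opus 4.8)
The plan is to pass to a power $\UCP^n$ that fixes each minimal projection of the stabilized multiplicative domain, and then read off the block decomposition from Theorem~\ref{Thm-mult-abel-PPT} and Corollary~\ref{lemma:coCP_md_struct}. First I would record that $\mathcal{M}_{\UCP^\infty}$ is a finite-dimensional abelian $C^*$-algebra. Indeed it equals $\mathcal{M}_{\UCP^m}$ for all large $m$, and each $\UCP^m$ is again a unital PPT channel, since a composition of PPT maps is PPT (both are completely positive and $t\circ(\UCP\circ\cdots\circ\UCP)=(t\circ\UCP)\circ\UCP\circ\cdots\circ\UCP$ is completely positive); thus Theorem~\ref{PPT-multi-dom} gives that $\mathcal{M}_{\UCP^\infty}$ is abelian. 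Writing $p_1,\dots,p_k$ for its minimal projections, they are mutually orthogonal, sum to $1$, and span $\mathcal{M}_{\UCP^\infty}\cong\mathbb{C}^k$.

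The key step is to show that $\UCP$ permutes $p_1,\dots,p_k$, which I would do by proving that $\UCP$ restricts to a $*$-automorphism of $\mathcal{M}_{\UCP^\infty}$. Invariance $\UCP(\mathcal{M}_{\UCP^\infty})\subseteq\mathcal{M}_{\UCP^\infty}$ follows from Lemma~\ref{lem:md_composition} applied to $\UCP^{N+1}=\UCP^{N}\circ\UCP$ with $N=\kappa(\UCP)$, which yields $\mathcal{M}_{\UCP^\infty}=\{x\in\Me:\UCP(x)\in\mathcal{M}_{\UCP^\infty}\}$; in particular $\UCP(x)\in\mathcal{M}_{\UCP^\infty}$ whenever $x\in\mathcal{M}_{\UCP^\infty}$. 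This restriction is a $*$-homomorphism, being the restriction of the $*$-homomorphism $\UCP|_{\Me}$, and it is injective because a trace-preserving positive map is faithful. An injective $*$-endomorphism of a finite-dimensional $C^*$-algebra is an automorphism, and an automorphism of $\mathbb{C}^k$ permutes the minimal projections, so $\UCP(p_i)=p_{\sigma(i)}$ for some $\sigma\in S_k$. Taking $n$ to be a multiple of the order of $\sigma$ with $n\geq\kappa(\UCP)$, the map $\Psi:=\UCP^n$ is a unital PPT channel satisfying $\Psi(p_i)=p_i$ for every $i$ and $\mathcal{M}_\Psi=\mathcal{M}_{\UCP^\infty}$.

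Next I would assemble the decomposition of $\Psi$. Since $p_1,\dots,p_k$ are orthogonal projections in $\mathcal{M}_\Psi$, Corollary~\ref{lemma:coCP_md_struct} annihilates every off-diagonal term, so $\Psi(x)=\sum_i\Psi(p_ixp_i)$ for all $x\in\Md$; and because $\Psi(p_i)=p_i$ belongs to the multiplicative domain, $\Psi(p_ixp_i)=p_i\Psi(p_ixp_i)p_i\in p_i\Md p_i$. Setting $\UCP_i:=\Psi|_{p_i\Md p_i}$ gives $\Psi=\UCP_1\oplus\cdots\oplus\UCP_k$. Each $\UCP_i$ is unital since $\UCP_i(p_i)=p_i$, and trace preserving since the trace on $p_i\Md p_i$ is the restriction of $\Tr$. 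It is also PPT: after conjugating by a unitary so that the $p_i$ become coordinate block projections, which preserves the PPT property, the maps $\UCP_i$ and $\UCP_i\circ t$ are exactly the restrictions of the completely positive maps $\Psi$ and $\Psi\circ t$ to the unital $C^*$-subalgebra $p_i\Md p_i$, and hence are completely positive.

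Finally, to obtain that each $\UCP_i$ has trivial multiplicative domain I would use a dimension count. A short computation, using the orthogonality of the blocks, shows that any $a\in\mathcal{M}_{\UCP_i}$, regarded as an element of $\Md$, lies in $\mathcal{M}_\Psi$, so that $\bigoplus_i\mathcal{M}_{\UCP_i}\subseteq\mathcal{M}_\Psi=\mathcal{M}_{\UCP^\infty}$. The right-hand side has dimension $k$, while each $\mathcal{M}_{\UCP_i}$ already contains $\mathbb{C}p_i$; comparing dimensions forces $\mathcal{M}_{\UCP_i}=\mathbb{C}p_i$, i.e.\ trivial multiplicative domain, for every $i$. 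I expect the automorphism/permutation step to be the main obstacle, since it requires squeezing out of Lemma~\ref{lem:md_composition} that $\UCP$ preserves $\mathcal{M}_{\UCP^\infty}$ and then combining faithfulness with finite-dimensionality; the PPT-under-compression claim is the other delicate point, but it becomes routine once the $p_i$ are in block-diagonal form.
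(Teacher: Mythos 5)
Your proof is correct and reaches the theorem along the same skeleton as the paper --- abelianness of $\mathcal{M}_{\UCP^\infty}$, a permutation of its minimal projections, a power of $\UCP$ fixing each $p_i$, then the block decomposition via Theorem \ref{Thm-mult-abel-PPT} and Corollary \ref{lemma:coCP_md_struct} --- but two of your sub-arguments genuinely differ from (and tighten) the paper's. First, the permutation step: the paper argues that $\UCP^{\kappa}$ (with $\kappa$ the multiplicative index) maps minimal projections to minimal projections by a rank-counting induction (trace preservation sends minimum-rank projections to minimum-rank projections, then the second-smallest rank, and so on), whereas you show that $\UCP$ itself restricts to an injective, hence bijective, $*$-endomorphism of the finite-dimensional algebra $\mathcal{M}_{\UCP^\infty}$, and an automorphism of $\C^k$ permutes the minimal projections. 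Your route also makes explicit the invariance $\UCP(\mathcal{M}_{\UCP^\infty})\subseteq\mathcal{M}_{\UCP^\infty}$ via Lemma \ref{lem:md_composition}, a fact the paper uses ("$\UCP^\kappa(p_i)$ is also a projection in $\mathcal{M}_{\UCP^\infty}$") without justification. Second, the endgame: the paper dispatches the claims that each block $\UCP_i$ is PPT with trivial multiplicative domain by a one-line appeal to Theorem \ref{Thm-mult-abel-PPT} applied to $\UCP^{m\kappa}$, while you verify PPT-ness by realizing $\UCP_i$ and $\UCP_i\circ t$ as restrictions of completely positive maps (after a unitary change of basis, which indeed preserves the PPT property), and you obtain $\mathcal{M}_{\UCP_i}=\C p_i$ from the inclusion $\bigoplus_i\mathcal{M}_{\UCP_i}\subseteq\mathcal{M}_{\UCP^n}=\mathcal{M}_{\UCP^\infty}$ together with a dimension count. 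The paper's rank argument avoids invoking faithfulness of the restricted map; yours is shorter, yields the stronger intermediate statement that $\UCP$ (not only $\UCP^\kappa$) permutes the $p_i$, and supplies precisely the details the published proof leaves implicit.
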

\begin{proof}
Let $\kappa$ be the multiplicative index of $\UCP$ defined as in \ref{def:mult-index}. Then the multiplicative domain of $\UCP^\kappa$ is just $\M_{\UCP^\infty}$. As $\M_{\UCP^\infty}\subseteq\Me$, by Theorem \ref{Thm-mult-abel-PPT}, this domain is an abelian subalgebra of $\Md$, and thus there is some set of minimum mutually orthogonal projections $p_1,\cdots, p_k$ that span $\M_{\UCP^\infty}$. We can apply Theorem \ref{Thm-mult-abel-PPT} repeatedly and get that the action of $\UCP^\kappa$ is
\begin{align}
\UCP^\kappa(x)=&\UCP^\kappa(p_1xp_1)+\cdots+\UCP^\kappa(p_kxp_k)\nonumber\\
=&\UCP^\kappa(p_1)\UCP^\kappa(p_1xp_1)\UCP^\kappa(p_1)+\cdots+\UCP^\kappa(p_k)\UCP^\kappa(p_kxp_k)\UCP^\kappa(p_k).\label{eq:PPTthm}
\end{align}
Since $p_i\in\M_{\UCP^\infty}$, then $\UCP^\kappa(p_i)$ is also a projection in $\M_{\UCP^\infty}$. If we take a projection of minimum rank, $p_i$, then $\UCP^\kappa(p_i)$ must also be a projection of minimum rank. Since these projections are all accounted for in $p_1,\cdots p_k$, then $\UCP^\kappa(p_i)=p_{\sigma(i)}$ for some permutation $\sigma:[k]\rightarrow[k]$. We can repeat this process for every projection of minimum rank, and then for the projections of second-smallest rank, until we construct a permutation $\sigma$ such that for all $i$, $\UCP^\kappa(p_i)=p_{\sigma(i)}$. Then, for any $x\in\Md$, we can use Equation \ref{eq:PPTthm}:
\begin{align*}
\UCP^{2\kappa}(x)=&\UCP^{\kappa}\left(p_{\sigma(1)}\UCP^\kappa(p_1xp_1)p_{\sigma(1)}+\cdots+p_{\sigma(k)}\UCP^\kappa(p_kxp_k)p_{\sigma(k)}\right)\\
=&\UCP^{\kappa}(p_{\sigma(1)})\UCP^{2\kappa}(p_1xp_1)\UCP^{\kappa}(p_{\sigma(1)})+\cdots+\UCP^{\kappa}(p_{\sigma(k)})\UCP^{2\kappa}(p_kxp_k)\UCP^{\kappa}(p_{\sigma(k)})\\
=&p_{\sigma^2(1)}\UCP^{2\kappa}(p_1xp_1)p_{\sigma^2(1)}+\cdots+p_{\sigma^2(k)}\UCP^{2\kappa}(p_kxp_k)p_{\sigma^2(k)}.
\end{align*}
Then $\sigma$ will have some finite order $m$, so repeating the calculations above gives:
\[\UCP^{m\kappa}(x)=p_1\UCP^{m\kappa}(p_1xp_1)p_1+\cdots+p_k\UCP^{m\kappa}(p_kxp_k)p_k.\]
Setting $n=m\kappa$ and $\UCP_i(x)=p_i\UCP^{m\kappa}(x)p_i$ for $x\in p_i\M_d p_i$, and applying Theorem \ref{Thm-mult-abel-PPT} to $\UCP^{m\kappa}$, which has multiplicative domain equal to $\M_{\UCP^\infty}$, gives the required result.
\end{proof}

At this juncture we note down some results that we will use to arrive to the main theorem of the section.
\begin{theorem}{\rm{(Gurvits-Barnum}, \cite{Gurvits})}\label{gurvits}
Let $\rho$ be a normalized density matrix in a bipartite system of total dimension $d=nm$ such that 
$\|\rho-\frac{1}{d}(1_n\otimes1_m) \|_{2}^{2}\leq \frac{1}{d(d+1)}$, then $\rho$ is separable, where $1_n, 1_m$ are the identity matrices in $\mathcal{M}_n$ and $\mathcal{M}_m$ respectively.
\end{theorem}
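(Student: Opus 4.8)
The plan is to read the statement geometrically: it asserts that a Hilbert--Schmidt ball around the maximally mixed state $\tau:=\frac1d(1_n\otimes1_m)=\frac{1_n}{n}\otimes\frac{1_m}{m}$ consists entirely of separable states. Since $\tau$ is itself a product state, the natural target is to show that every Hermitian perturbation of $\tau$ of small Frobenius norm remains separable. Writing $\rho=\tau+A$, the hypotheses force $A$ to be Hermitian and traceless (as $\Tr\rho=\Tr\tau=1$) with $\|A\|_2^2\le\frac1{d(d+1)}$, and I must certify separability of $\rho$, which I would do indirectly via duality rather than by exhibiting a product decomposition.

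First I would dualize. The cone of (unnormalized) separable matrices in $\Md$ and the cone of \emph{block-positive} operators --- Hermitian $W$ with $\langle x\otimes y|W|x\otimes y\rangle\ge0$ for all $x\in\C^n$, $y\in\C^m$ --- are dual to one another under the Hilbert--Schmidt pairing, so a positive $\rho$ is separable if and only if $\Tr(W\rho)\ge0$ for every block-positive $W$. It therefore suffices to verify this one scalar inequality for an arbitrary block-positive $W$. Two preliminary facts I would record: averaging the defining inequality against $\int|x\rangle\langle x|\otimes|y\rangle\langle y|=\tau$ gives $\frac1d\Tr(W)\ge0$, so I may assume $\Tr(W)>0$ (if $\Tr W=0$ then $W=0$ since product projectors span $\Md$); and because $A$ is traceless, only the traceless part of $W$ pairs with it.

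The heart of the estimate is Cauchy--Schwarz. Splitting off the trace and writing $t:=\Tr(W)>0$, I have $\Tr(W\rho)=\frac td+\Tr\big((W-\tfrac td 1_d)A\big)\ge\frac td-\|W-\tfrac td1_d\|_2\,\|A\|_2$, where $\|W-\tfrac td1_d\|_2^2=\Tr(W^2)-\tfrac{t^2}d$. Thus the entire theorem reduces to the purely geometric estimate $\Tr(W^2)\le\tfrac{d+2}d\,t^2$ for every block-positive $W$: granting this, $\|W-\tfrac td1_d\|_2\le t\sqrt{\tfrac{d+1}d}$, whence $\Tr(W\rho)\ge\tfrac td-t\sqrt{\tfrac{d+1}d}\cdot\tfrac1{\sqrt{d(d+1)}}=0$, using the hypothesis on $\|A\|_2$. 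One can note that the stronger and in fact sharp bound $\Tr(W^2)\le t^2$ would yield the larger separable radius $\tfrac1{\sqrt{d(d-1)}}$, of which the stated $\tfrac1{\sqrt{d(d+1)}}$ is a comfortable consequence.

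The main obstacle is therefore this extremal inequality on entanglement witnesses. For genuinely positive $W$ it is immediate, since $\sum_i\lambda_i^2\le(\sum_i\lambda_i)^2$, and for decomposable witnesses $W=Q^{\Gamma}$ with $Q\ge0$ it follows because partial transposition preserves the Frobenius norm. The difficulty is the indefinite, non-decomposable witnesses: here I would normalize $W=\beta 1_d-V$ with $V\ge0$ and $\beta=\max_{x,y}\langle x\otimes y|V|x\otimes y\rangle$ the least constant keeping $W$ block-positive, reduce by convexity to extreme rays of the block-positive cone, and control the negative part through the bound $\max_{x,y}\langle x\otimes y|P|x\otimes y\rangle=1/\min(n,m)$ on the overlap of a product vector with an entangled projector $P$ (this is exactly where the dimensional constants $d(d\pm1)$ enter). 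Carrying this optimization through for every admissible witness --- equivalently, identifying the extreme points of the block-positive cone and checking that the inequality is saturated only on the boundary of block-positivity --- is the delicate technical step; the duality-plus-Cauchy--Schwarz scaffolding above is routine by comparison.
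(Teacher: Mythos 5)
A preliminary remark: the paper contains no proof of this statement at all --- Theorem \ref{gurvits} is quoted from Gurvits--Barnum \cite{Gurvits} and used as a black box --- so your proposal must stand on its own. Its scaffolding is correct: the cone duality (a Hermitian $\rho$ lies in the separable cone iff $\Tr(W\rho)\ge 0$ for every block-positive $W$), the splitting $\rho=\tau+A$ with $A$ traceless, and the Cauchy--Schwarz step are all sound, and your arithmetic correctly reduces the theorem to the inequality $\Tr(W^2)\le\frac{d+2}{d}(\Tr W)^2$ for every block-positive $W$, with the sharp form $\Tr(W^2)\le(\Tr W)^2$ indeed giving the optimal radius $1/\sqrt{d(d-1)}$. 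The gap is that this inequality \emph{is} the Gurvits--Barnum theorem: essentially the entire content of \cite{Gurvits} is the proof that $\|W\|_2\le \Tr(W)$ for every $W$ that is positive on product vectors. You verify it only for $W\ge 0$ and for decomposable witnesses $W=P+Q^{\Gamma}$ with $P,Q\ge 0$, which is exactly the easy regime; by St\o rmer--Woronowicz this exhausts all witnesses only when $nm\le 6$, so your argument is complete in those low dimensions but in general the non-decomposable witnesses --- the ones that detect PPT entangled states --- carry all of the difficulty, and you explicitly defer them. A reduction of a theorem to its hardest ingredient is not a proof of that theorem.

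Moreover, the strategy you sketch for the remaining case would not go through as written. First, the identity $\max_{x,y}\langle x\otimes y|P|x\otimes y\rangle=1/\min(n,m)$ is false for a general entangled projector: for a rank-one projector onto a pure state with largest Schmidt coefficient $\lambda_1$, the maximal product overlap equals $\lambda_1^2$, which is bounded \emph{below} by $1/\min(n,m)$ (equality only for maximally entangled states) and can be arbitrarily close to $1$; for higher-rank projectors it is larger still. Second, ``reduce by convexity to extreme rays of the block-positive cone'' is not an actionable step: the function $W\mapsto\Tr(W^2)$ is convex, so the reduction is legitimate in principle, but the extreme rays of the block-positive cone (equivalently, the extremal positive maps under the Choi correspondence) are not classified --- this is a well-known open problem --- and the proof in \cite{Gurvits} does not route through any such classification; it establishes $\Tr(W)\ge\|W\|_2$ by a direct argument. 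To complete your proposal you would need to actually prove $\Tr(W^2)\le\frac{d+2}{d}(\Tr W)^2$ (or the sharp version) for arbitrary block-positive $W$, and nothing in your outline accomplishes that.
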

Note that $1_n\otimes1_m$ is the Choi matrix for the channel $\Omega:\mathcal{M}_n\rightarrow\mathcal{M}_m$ given by $\Omega(x)=\tr(x)\frac{1_m}{n}$, for all $x\in \mathcal{M}_m$. Using the Choi-Jamiolkowski identification with matrices in $\mathcal{M}_n\otimes\mathcal{M}_m$ and the linear maps from $\mathcal{M}_n$ to $\mathcal{M}_m$, the above theorem ensures that the map $\Omega$ has a neighborhood where each quantum channel in the neighborhood is entanglement breaking.

In what follows $\mathbb{T}$ represents the unit circle in the complex plane. Note that (see \cite{wolf}) for any quantum channel $\UCP$, all the eigenvalues lie in the closed unit disc of the complex plane. We define the peripheral spectrum ($\rm{Spec}_{\UCP}$) of $\UCP$ as follows
\[\rm{Spec}_{\UCP}=\{\lambda\in \mathbb{C} \ | \ (\lambda. \rm{id}-\UCP) \ \text{is \ not \ \  invertible \ on} \ \M_d \},\] 
where $``\rm{id}"$ is the identity operator on $\Md$. The set $\rm{Spec}_{\UCP}\cap \mathbb{T}$ is called the \emph{peripheral eigenvalues} and any $a\in \Md$ satisfying $\UCP(a)=\lambda a$, with $|\lambda|=1$, is called a \emph{peripheral eigenvector} corresponding $\lambda$. It is a consequence of the Kadison-Schwarz inequality that for a unital channel $\UCP$, if $\UCP(a)=\lambda a$, for some $|\lambda|=1$, then $a\in \Me$. See 
\cite{miza}, Corollary 2.2 for more details.

\begin{theorem}{\rm{(See \cite{wolf}, Theorem 6.7)}}\label{thm:wolf}
Let $\UCP:\Md\rightarrow \Md$ be a unital channel such that it has trivial peripheral spectrum, that is $\rm{Spec}_{\UCP}\cap \mathbb{T}=\{1\}$, then 
$\displaystyle\lim_{n\to\infty}\UCP^{n}(x)=\rm{Tr}(x)\frac{1}{d}$ for all $x\in \Md$.
\end{theorem}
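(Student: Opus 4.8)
The plan is to treat $\UCP$ as a single linear operator on the $d^2$-dimensional space $\Md$ and to read off the asymptotics from its Jordan decomposition, then to identify the limit using unitality and trace preservation. First I would record that a positive unital map is a contraction for the operator norm, since $\|\UCP\|=\|\UCP(1)\|=1$; hence $\UCP$ is power bounded, $\|\UCP^n\|\le 1$ for all $n$. Two consequences follow. Every eigenvalue of $\UCP$ lies in the closed unit disc, and no peripheral eigenvalue can carry a nontrivial Jordan block: a block of size $\ge 2$ attached to some $\lambda$ with $|\lambda|=1$ would make the off-diagonal entries of $\UCP^n$ grow like $n$, contradicting power-boundedness. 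Thus the peripheral part of $\UCP$ is semisimple. The hypothesis $\mathrm{Spec}_{\UCP}\cap\mathbb{T}=\{1\}$ then says that, apart from the eigenvalue $1$ (whose eigenvector $1\in\Md$ is supplied by unitality), every eigenvalue has modulus strictly less than $1$.

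Next I would split $\Md=V_1\oplus W$ into $\UCP$-invariant subspaces, where $V_1=\mathrm{Fix}_{\UCP}=\ker(\UCP-\mathrm{id})$ (the generalized eigenspace for $1$, which equals the eigenspace by semisimplicity) and $W$ collects the remaining generalized eigenspaces, on which the spectral radius of $\UCP$ is some $r<1$. Let $P_1$ denote the spectral projection onto $V_1$ along $W$. Then $\UCP^n=P_1+\UCP^n|_W$, and since $\|\UCP^n|_W\|\le C\,n^{d^2}r^n\to 0$, we get $\UCP^n\to P_1$ in norm, with geometric rate. The operative content of ``trivial peripheral spectrum'' is that $1$ is a \emph{simple} eigenvalue, so $\dim V_1=1$; as $1\in V_1$ this forces $V_1=\C\,1$, and therefore $P_1(x)=\phi(x)\,1$ for some linear functional $\phi$ on $\Md$.

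It remains to pin down $\phi$, and here I would use trace preservation: $\Tr(\UCP^n(x))=\Tr(x)$ for every $n$, so letting $n\to\infty$ gives
\[
\Tr(P_1(x))=\phi(x)\,\Tr(1)=\phi(x)\,d=\Tr(x),
\]
whence $\phi(x)=\Tr(x)/d$ and $\displaystyle\lim_{n\to\infty}\UCP^n(x)=\Tr(x)\tfrac{1}{d}$, as claimed. The main obstacle is the identification step rather than the convergence: the limit $\UCP^n\to P_1$ is robust and comes from power-boundedness together with the spectral gap, but the clean form of the limit hinges on $V_1$ being one-dimensional. One must read the hypothesis as asserting that $1$ is a non-degenerate peripheral eigenvalue (equivalently, that $\UCP$ has a unique fixed point, which for a unital channel must then be the maximally mixed state $\tfrac{1}{d}$); without this the limit is only the projection onto a possibly larger fixed-point subalgebra $\mathrm{Fix}_{\UCP}$ and need not equal $\Tr(\cdot)\tfrac{1}{d}$. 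Granting simplicity, trace preservation does the rest automatically.
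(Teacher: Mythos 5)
The paper does not actually prove this statement --- it is quoted as background from Wolf's lecture notes (Theorem 6.7 there), so there is no internal proof to compare against; what you have supplied is the standard finite-dimensional spectral argument, and it is essentially correct. Positivity plus unitality gives $\|\UCP^n\|\le 1$ (Russo--Dye), power-boundedness rules out nontrivial Jordan blocks for peripheral eigenvalues, the spectral gap then gives \emph{norm} convergence of $\UCP^n$ to the spectral projection $P_1$ onto the fixed-point space at a geometric rate (stronger than the pointwise statement claimed), and trace preservation identifies the limit once that space is one-dimensional. The point that deserves emphasis is precisely the caveat you raised yourself: the set-theoretic hypothesis $\mathrm{Spec}_{\UCP}\cap\mathbb{T}=\{1\}$ does not by itself force $\dim\ker(\UCP-\mathrm{id})=1$. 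The conditional expectation onto the diagonal algebra, $x\mapsto\sum_{i}E_{ii}xE_{ii}$, is a unital channel whose spectrum is $\{0,1\}$, yet its powers converge to the diagonal part of $x$, not to $\Tr(x)\tfrac{1}{d}$; so the theorem read literally is false, and your reading --- that ``trivial peripheral spectrum'' means the peripheral spectrum \emph{counted with multiplicity} is the single simple eigenvalue $1$ --- is the correct and intended one (it is how primitivity is defined in Wolf's notes). Reassuringly, this is also the only way the paper ever invokes the theorem: it applies it to channels with $\M_{\UCP^\infty}=\C 1$, and since the peripheral eigenvectors generate $\M_{\UCP^\infty}$, that hypothesis does force the eigenvalue $1$ to be simple, so your proof covers every use made of the result.
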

Channels with trivial peripheral spectrum are known as ``primitive" channels. These maps are generic in the sense that they are dense in the set of channels. Following \cite{miza}, Corollary 3.5, a unital channel $\UCP$ is primitive if and only if the stabilized multiplicative domain is trivial, that is, $\mathcal{M}_{\UCP^\infty}=\mathbb{C}1$, where $1$ is the identity matrix in $\Md$.  

With all the necessary background we are ready to write down the main result of this section:
\begin{theorem}\label{thm-PPT-finite-index}
Every unital PPT channel has finite index of separability.
\end{theorem}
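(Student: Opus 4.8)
The plan is to combine the block decomposition of Theorem \ref{thm:PPT_Minf_decomp} with the two analytic inputs recorded above: Wolf's convergence theorem (Theorem \ref{thm:wolf}) and the Gurvits--Barnum separability ball (Theorem \ref{gurvits}). By Theorem \ref{thm:PPT_Minf_decomp} there is a power $n$ for which $\UCP^n=\UCP_1\oplus\cdots\oplus\UCP_k$, where each $\UCP_i$ is a unital PPT channel on the corner $p_i\Md p_i$ with trivial multiplicative domain. Since an iterate of an entanglement breaking map is again entanglement breaking (Theorem \ref{Thm-Horodecki}), it suffices to show that some further iterate $\UCP^{nm}$ is entanglement breaking, and for this it is enough to treat each block $\UCP_i$ separately and then reassemble.

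First I would argue that each block is primitive. The multiplicative domain of $\UCP_i$ is trivial, i.e. equals $\mathbb{C}p_i$; since the stabilized multiplicative domain $\M_{\UCP_i^\infty}$ is a subalgebra of $\M_{\UCP_i}$ and always contains the identity $p_i$ of the corner, we get $\M_{\UCP_i^\infty}=\mathbb{C}p_i$. By the criterion of \cite{miza} quoted above, this means $\UCP_i$ has trivial peripheral spectrum, i.e. $\UCP_i$ is primitive. Writing $d_i=\mathrm{rank}(p_i)$ and identifying $p_i\Md p_i\cong\mathcal{M}_{d_i}$, Theorem \ref{thm:wolf} then yields $\lim_{m\to\infty}\UCP_i^m(x)=\Tr(x)\tfrac{p_i}{d_i}$ for all $x$; that is, the iterates of $\UCP_i$ converge to the completely depolarizing channel $\Omega_i$ on the block.

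Next I would invoke Gurvits--Barnum. Because everything is finite dimensional, convergence of the maps is the same as convergence of their Choi matrices, and the (normalized) Choi matrix of $\Omega_i$ is exactly the maximally mixed matrix sitting at the centre of the separability ball of Theorem \ref{gurvits}. Hence for all large $m$ the Choi matrix of $\UCP_i^m$ lies inside that ball and is therefore separable, so $\UCP_i^m$ is entanglement breaking. Choosing $m$ large enough to work simultaneously for the finitely many blocks, and using that entanglement breaking is inherited by higher iterates (Theorem \ref{Thm-Horodecki}), we may assume $\UCP_i^m$ is entanglement breaking for every $i$. Finally, if each $\UCP_i^m(y)=\sum_j\Tr(yR_{ij})Q_{ij}$ with $R_{ij},Q_{ij}\ge 0$ supported in $p_i\Md p_i$, then using $p_iR_{ij}p_i=R_{ij}$ one computes $\UCP^{nm}(x)=\sum_{i,j}\Tr(xR_{ij})Q_{ij}$, which is of the Horodecki form of Theorem \ref{Thm-Horodecki}; hence $\UCP^{nm}$ is entanglement breaking and $\UCP$ has finite index of separability.

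The conceptually heavy lifting has already been done in Theorem \ref{thm:PPT_Minf_decomp} (the ``lining up'' of the minimal projections via the permutation $\sigma$), so the remaining work is mostly bookkeeping. The one point that genuinely needs care is the passage from the trivial multiplicative domain of each block to primitivity, since it is primitivity --- not merely a trivial multiplicative domain at a single stage --- that powers Wolf's theorem; I would make sure the \emph{stabilized} domain, rather than $\M_{\UCP_i}$ alone, is what is trivial. A secondary subtlety is securing a common exponent $m$ across all blocks and confirming that a direct sum of entanglement breaking blocks is again entanglement breaking, both of which are handled by the composition stability in Theorem \ref{Thm-Horodecki} together with the explicit Horodecki form above.
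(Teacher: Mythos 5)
Your proof is correct and follows essentially the same route as the paper's: the block decomposition of Theorem \ref{thm:PPT_Minf_decomp}, primitivity of each block via the trivial (stabilized) multiplicative domain criterion from \cite{miza}, convergence to the depolarizing channel by Theorem \ref{thm:wolf}, separability of nearby Choi matrices by Theorem \ref{gurvits}, and a common exponent across blocks. Your explicit verification that a direct sum of entanglement breaking blocks is entanglement breaking (via the Horodecki form) is a detail the paper merely asserts, but it is the same argument.
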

\begin{proof}
Consider the channels $\UCP_1,\cdots,\UCP_k$ in Theorem \ref{thm:PPT_Minf_decomp}. Each channel has trivial multiplicative domain, so by \cite{miza}, its peripheral spectrum is trivial. Thus, we can apply Theorem \ref{thm:wolf} and conclude that $\lim_{n}\UCP_i^n(x)=\Tr(x)\tfrac{1}{d_i}$. Then there will be some finite $\ell_i$ such that $\UCP_i^{\ell_i}$ is close enough to $\Tr(x)\tfrac{1}{d_i}$ that their Choi matrices are within a distance of $\frac{1}{d_i(d_i+1)}$. By Theorem \ref{gurvits}, this means that the Choi matrix of $\UCP_i^{\ell_i}$ is separable, that is $\UCP_i^{\ell_i}$ is entanglement breaking. Letting $\ell=\max\{\ell_1,\cdots,\ell_k\}$, then 
\[\UCP^{n\ell}=\UCP_1^{\ell}\oplus\cdots\oplus\UCP_k^{\ell}\]
will be a direct sum of entanglement breaking channels, and thus is entanglement breaking.
\end{proof}

\section{Beyond PPT maps}\label{sec:Beyond PPT maps}
PPT maps turn out to be a special case of channels with finite index of separability. 
This turns out to be, topologically, an abundant class of channels. We need the following result:
\begin{theorem}[see \cite{stormer}, corollary 7.5.5]\label{stormer-cor}
Let $\Phi$ be a positive map such that $\|\Phi\|\leq 1$, then $x\mapsto Tr(x)1+\Phi(x)$ is entanglement breaking.
\end{theorem}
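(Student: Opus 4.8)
The plan is to prove that the Choi matrix of $\Psi(x):=\Tr(x)1+\Phi(x)$ is separable; by Theorem \ref{Thm-Horodecki} this is equivalent to $\Psi$ being entanglement breaking. A direct computation with matrix units gives
\[
C_\Psi = 1\otimes 1 + C_\Phi ,
\]
where $C_\Phi$ is the Choi matrix of $\Phi$. One might hope to invoke the Gurvits--Barnum ball (Theorem \ref{gurvits}), but $\|C_\Phi\|_2$ is generically of order $d$ (for instance $C_\Phi$ is the swap operator when $\Phi$ is the transpose), so $C_\Psi$ is far too distant from the maximally mixed state for that estimate to apply. Instead I would use the duality between separability and entanglement witnesses: a positive matrix $S$ is separable if and only if $\Tr(SB)\ge 0$ for every \emph{block-positive} $B$ (meaning $\bra{e\otimes f}B\ket{e\otimes f}\ge 0$ for all product vectors), and the block-positive matrices are precisely the Choi matrices $C_\Xi$ of positive maps $\Xi$.

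First I would reduce the whole statement to a single separability fact. Writing $\ket{\Omega}=\sum_i\ket{ii}$ and letting $\Xi^*$ denote the trace-adjoint of $\Xi$ (again positive), a short matrix-unit computation yields the identities $\Tr\big((1\otimes 1)C_\Xi\big)=\Tr(\Xi(1))$ and $\Tr(C_\Phi C_\Xi)=\bra{\Omega}C_{\Xi^*\circ\Phi}\ket{\Omega}$. Hence, for every positive map $\Xi$,
\[
\Tr(C_\Psi C_\Xi)=\Tr(\Xi(1))+\bra{\Omega}C_\Gamma\ket{\Omega}, \qquad \Gamma:=\Xi^*\circ\Phi \ \text{positive}.
\]
Since $\Phi$ is positive, $\|\Phi\|=\|\Phi(1)\|\le 1$ forces $\Phi(1)\le 1$, so $\Tr(\Xi(1))=\Tr(\Xi^*(1))\ge\Tr(\Xi^*(\Phi(1)))=\Tr(C_\Gamma)$. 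Therefore
\[
\Tr(C_\Psi C_\Xi)\ \ge\ \Tr(C_\Gamma)+\bra{\Omega}C_\Gamma\ket{\Omega}\ =\ \Tr\!\big(C_\Gamma\,(1\otimes 1+\ket{\Omega}\bra{\Omega})\big),
\]
and it remains only to show this last quantity is nonnegative for every positive $\Gamma$.

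The key step, and the main obstacle, is to prove that $W:=1\otimes 1+\ket{\Omega}\bra{\Omega}$ is separable; granting this, nonnegativity is immediate from the easy half of the witness duality, since $C_\Gamma$ is block-positive. I would establish separability of $W$ by an explicit product-state integral. By Schur's lemma the Haar average $\rho_{\mathrm{av}}:=\int (U\ket{0}\bra{0}U^*)\otimes(\bar U\ket{0}\bra{0}\bar U^*)\,dU$ is invariant under every $U\otimes\bar U$ and hence lies in the two-dimensional commutant $\mathrm{span}\{1\otimes 1,\ket{\Omega}\bra{\Omega}\}$; computing its trace ($=1$) and its overlap $\bra{\Omega}\rho_{\mathrm{av}}\ket{\Omega}$ ($=1$) identifies it as $W/(d^2+d)$. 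As $\rho_{\mathrm{av}}$ is visibly a convex combination of product states, $W$ is separable. Conceptually, $W$ is exactly the isotropic state of singlet fraction $1/d$, sitting on the boundary of the separable cone, so the hypothesis $\|\Phi\|\le 1$ is precisely what keeps $C_\Psi$ separable.

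Putting the pieces together, $\Tr(C_\Psi C_\Xi)\ge 0$ for every positive map $\Xi$, so by witness duality $C_\Psi$ is separable and $\Psi$ is entanglement breaking. The only nontrivial external ingredient is the hard direction of the witness duality (equivalently the Horodecki positive-map separability criterion); the remaining content is the two Choi-matrix identities and the explicit twirl computing $W$.
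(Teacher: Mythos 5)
Your proof is correct, but there is nothing internal to compare it with: the paper never proves this statement --- it imports it verbatim from St\o rmer's book (\cite{stormer}, Corollary 7.5.5) and uses it as a black box in the proof of Theorem \ref{thm-denseness}. What you have written is therefore a self-contained substitute for that citation, and it checks out. The identity $C_\Psi=1\otimes 1+C_\Phi$ is immediate; both of your trace identities verify against matrix units (each side of $\Tr(C_\Phi C_\Xi)=\bra{\Omega}C_{\Xi^*\circ\Phi}\ket{\Omega}$ equals $\sum_{i,j}\bigl((\Xi^*\circ\Phi)(E_{ij})\bigr)_{ij}$, and $\Tr\bigl((1\otimes1)C_\Xi\bigr)=\Tr(\Xi(1))$); the adjoint and the composition of positive maps are positive, so $\Gamma$ is a legitimate witness-side object; $\Phi(1)\le 1$ already follows from the trivial bound $\|\Phi(1)\|\le\|\Phi\|\le 1$, so you do not even need the Russo--Dye equality $\|\Phi\|=\|\Phi(1)\|$ you invoke; and the $U\otimes\bar U$ twirl correctly identifies $1\otimes1+\ket{\Omega}\bra{\Omega}$ as $d(d+1)$ times a manifestly separable state (your two linear equations force $\alpha=\beta=1/(d(d+1))$ for $d\ge 2$, and $d=1$ is trivial). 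The only genuinely nontrivial external ingredient is the bidual-cone fact that the separable cone is the dual of the block-positive cone (equivalently the Horodecki positive-map criterion); that is standard, and you flag it honestly. Your side remark that Theorem \ref{gurvits} cannot do this job is also accurate and worth keeping: for $\Phi=t$ one gets $C_\Psi=1\otimes1+\mathrm{SWAP}$, i.e.\ twice the projection onto the symmetric subspace, which lies on the boundary of the separable cone and far outside the Gurvits--Barnum ball --- so the statement is tight and some duality (or explicit decomposition) argument of the kind you give is unavoidable.
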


\begin{theorem}\label{thm-denseness}
The set of unital channels on $\Md$ which have finite index of separability is dense (C.B topology) in the set of unital channels. 
\end{theorem}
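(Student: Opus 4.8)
The plan is to perturb an arbitrary unital channel toward the completely depolarizing channel $\Omega(x)=\tr(x)\tfrac{1}{d}$ and to exploit the fact that $\Omega$ absorbs every unital trace-preserving map under composition. First I would fix a unital channel $\UCP$ and, for $\epsilon\in(0,1)$, set $\UCP_\epsilon=(1-\epsilon)\UCP+\epsilon\,\Omega$. Since $\Omega$ is itself a unital channel (indeed $\Omega(1)=1$, $\tr(\Omega(x))=\tr(x)$, and $\Omega$ is completely positive), $\UCP_\epsilon$ is again a unital channel as a convex combination of two such maps. Moreover $\|\UCP_\epsilon-\UCP\|_{cb}=\epsilon\,\|\Omega-\UCP\|_{cb}\to 0$ as $\epsilon\to 0$, so $\UCP_\epsilon$ approximates $\UCP$ in the CB topology. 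It therefore suffices to show that each $\UCP_\epsilon$ has finite index of separability.

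The key observation is that $\Omega$ is idempotent and absorbing: because $\UCP$ is unital we have $\UCP\circ\Omega=\Omega$, because $\UCP$ is trace preserving we have $\Omega\circ\UCP=\Omega$, and $\Omega^2=\Omega$. Using these three identities, a routine induction on $n$ yields
\[\UCP_\epsilon^{\,n}=(1-\epsilon)^n\,\UCP^n+\bigl(1-(1-\epsilon)^n\bigr)\Omega.\]
Since $\UCP$ is a unital channel, $\|\UCP^n\|_{cb}=1$ for every $n$, so the first summand has CB norm at most $(1-\epsilon)^n$, which tends to $0$. Hence $\UCP_\epsilon^{\,n}\to\Omega$ as $n\to\infty$, and in particular the Choi matrix of $\UCP_\epsilon^{\,n}$ converges to the Choi matrix $\tfrac{1}{d}(1_d\otimes 1_d)$ of $\Omega$.

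Finally I would invoke the Gurvits--Barnum estimate (Theorem \ref{gurvits}): once $\UCP_\epsilon^{\,n}$ is close enough to $\Omega$ that its normalized Choi matrix lands inside the separability ball around the maximally mixed state, that Choi matrix is separable and so $\UCP_\epsilon^{\,n}$ is entanglement breaking. Thus there is a finite $n$ with $\UCP_\epsilon^{\,n}$ entanglement breaking, i.e.\ $\UCP_\epsilon$ has finite index of separability, which completes the argument. The only point needing care is matching normalizations, so that the geometric CB convergence above is translated into the $\|\cdot\|_2$ distance required by Theorem \ref{gurvits}; this is not a serious obstacle, since all norms on the finite-dimensional space $\Md\otimes\Md$ are equivalent and the convergence in $n$ is geometric. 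An alternative, less constructive route would be to observe that primitive unital channels are dense and that each of them has finite index of separability by the argument of Theorem \ref{thm-PPT-finite-index}; the explicit perturbation toward $\Omega$ is preferable because it avoids having to establish density of the primitive channels separately.
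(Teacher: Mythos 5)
Your proposal is correct, and it uses exactly the same perturbation as the paper: replace $\Phi$ by the convex combination $(1-\epsilon)\Phi+\epsilon\,\Omega$ with the depolarizing channel $\Omega$, and exploit the absorption identities $\Phi\circ\Omega=\Omega\circ\Phi=\Omega^2=\Omega$ to get the power formula $\UCP_\epsilon^{\,n}=(1-\epsilon)^n\Phi^n+\bigl(1-(1-\epsilon)^n\bigr)\Omega$. Where you diverge is the closing ingredient. The paper rewrites $\UCP_\epsilon^{\,n}$ as a positive multiple of $x\mapsto \Tr(x)1+\tfrac{d(1-\epsilon)^n}{1-(1-\epsilon)^n}\Phi^n(x)$ and invokes St\o rmer's criterion (Theorem \ref{stormer-cor}): as soon as $\bigl\|\tfrac{d(1-\epsilon)^n}{1-(1-\epsilon)^n}\Phi^n\bigr\|\leq 1$, the map is entanglement breaking. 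This is purely algebraic, needs no Choi-matrix estimates, and involves no dimension-dependent ball radius. You instead note that $\UCP_\epsilon^{\,n}\to\Omega$ geometrically in CB norm and invoke the Gurvits--Barnum separability ball (Theorem \ref{gurvits}) around the maximally mixed state; this requires the normalization bookkeeping you flag (the Choi matrix of $\UCP_\epsilon^{\,n}$ has trace $d$, so you compare $\tfrac{1}{d}C_{\UCP_\epsilon^{\,n}}$ with $\tfrac{1}{d^2}(1_d\otimes 1_d)$ in Frobenius norm, using that $\|C_{\Phi^n}\|_2\leq d$ uniformly in $n$ since Choi matrices of channels are positive with trace $d$), but it goes through because the convergence is geometric. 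Your route has the merit of being the same mechanism the paper already uses to prove Theorem \ref{thm-PPT-finite-index}, so it makes the two arguments uniform; the paper's route (attributed to the referee) is shorter and avoids quantitative separability estimates altogether. Your remark about the alternative via density of primitive channels is also sound, and is essentially the non-constructive version of the same idea.
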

\begin{proof}
Let $\Phi$ be a unital channel. We will show that given $\delta>0$, there is a unital channel $\UCP$ approximating $\Phi$ within $\delta$ in CB norm and $\UCP$ has finite index of separability.

To this end, let us define the map $\UCP(x)=(1-\frac{\delta}{2})\Phi(x)+\frac{\delta}{2} \Omega(x)$, $\forall x\in \Md$,  where $\Omega(x)=\tr(x)\frac{1}{d}$. It is easily verified that $\UCP$ is a unital channel and  
\[\|\UCP-\Phi\|_{\rm{cb}}\leq \frac{\delta}{2}\|(\Phi+\Omega)\|_{\rm{cb}}\leq\delta. \]
For simplicity call $a=(1-\frac{\delta}{2})$. As $\Phi\circ\Omega=\Omega\circ\Phi=\Omega$, it follows that $\UCP^n=a^n\Phi^n+(1-a^n)\Omega$ for every $n\ge 1$. So
\[\UCP^n=(1-a^n)(\Omega +\frac{a^n}{1-a^n}\Phi^n)=\frac{(1-a^n)}{d}\left(Tr(x)1+\frac{da^n}{1-a^n}\Phi^n\right).\]
As $a^n\to 0$ as $n\to\infty$, it follows that $\|\frac{da^n}{1-a^n}\Phi^n\|\leq 1$ for large $n$ and hence $\Ep^n$ is entanglement breaking for large $n$ by Theorem \ref{stormer-cor}.  
\end{proof}

\subsection{Schur Channels}
Next we provide some concrete examples of maps which are not PPT, yet they have finite index of separability. We start with a definition of Schur channels. Recall for two $d\times d$  matrices $a=(a_{i,j})$ and $b=(b_{i,j})$, the Schur product is defined as $a\circ b=(a_{i,j}b_{i,j})$.
\begin{definition}
Given a matrix $b\in \Md$, we define a map $T_b(x)=b\circ x$ on $\Md$. Such maps are called Schur product maps.
\end{definition}
It is well known that $T_b$ is completely positive if and only if the matrix $b\geq 0$. Moreover, if $b$ has all its diagonal entries equal to 1, then $T_b$ is a unital channel.
\begin{prop}\label{Prop:Schur-characterize}
For a Schur Channel $T_b$ on $\Md$, the following statements are equivalent:
	\begin{enumerate}
		\item
		$T_b$ is PPT.
		\item
		$T_b$ is entanglement breaking.
		\item
		$b=1$, the identity element in $\Md$.
	\end{enumerate}
\end{prop}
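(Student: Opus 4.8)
The plan is to prove the cycle of implications $(3)\Rightarrow(2)\Rightarrow(1)\Rightarrow(3)$. The implication $(2)\Rightarrow(1)$ needs no work, since every entanglement breaking map is automatically PPT (its Choi matrix is separable, hence has positive partial transpose, as already noted in Section \ref{sec:Structure of PPT Maps}). For $(3)\Rightarrow(2)$, I would simply observe that when $b=1$ is the identity matrix, $T_1(x)=1\circ x=\sum_i x_{ii}E_{ii}=\sum_i \Tr(xE_{ii})E_{ii}$; as each $E_{ii}$ is a density matrix and each coefficient $\Tr(xE_{ii})$ enters positively, this is exactly the Horodecki form of Theorem \ref{Thm-Horodecki}(2), so $T_1$ (the completely dephasing map) is entanglement breaking. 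The entire content therefore lies in $(1)\Rightarrow(3)$.

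For that remaining implication I would first write down the Choi matrix of a Schur map. Since $b\circ E_{ij}=b_{ij}E_{ij}$, one obtains
\[ C_{T_b}=\sum_{i,j}E_{ij}\otimes(b\circ E_{ij})=\sum_{i,j}b_{ij}\,E_{ij}\otimes E_{ij}, \]
a matrix supported on the $d$-dimensional ``diagonal'' subspace $\Span\{e_i\otimes e_i\}$, where it is just a copy of $b$. The PPT hypothesis says the partial transpose $C_{T_b}^{\Gamma}$ (transposing the second tensor leg) is positive semi-definite. Since $E_{ij}\otimes E_{ij}\mapsto E_{ij}\otimes E_{ji}$ under this operation, we get
\[ C_{T_b}^{\Gamma}=\sum_{i,j}b_{ij}\,E_{ij}\otimes E_{ji}. \]

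The crux is then a block decomposition of $C_{T_b}^{\Gamma}$. The diagonal terms $i=j$ contribute $\sum_i b_{ii}\,E_{ii}\otimes E_{ii}\geq 0$, while for each unordered pair $\{i,j\}$ with $i\neq j$ the two terms $b_{ij}\,E_{ij}\otimes E_{ji}+b_{ji}\,E_{ji}\otimes E_{ij}$ act only on $\Span\{e_i\otimes e_j,\ e_j\otimes e_i\}$, where (using $b_{ji}=\overline{b_{ij}}$ since $b\geq 0$ is Hermitian) they form the $2\times 2$ block $\begin{bmatrix} 0 & b_{ij}\\ \overline{b_{ij}} & 0 \end{bmatrix}$ with eigenvalues $\pm|b_{ij}|$. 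Positivity of $C_{T_b}^{\Gamma}$ forces $-|b_{ij}|\geq 0$, hence $b_{ij}=0$ for every $i\neq j$. Because $T_b$ is a unital channel its diagonal entries satisfy $b_{ii}=1$, so $b=1$, closing the cycle. The only delicate point is keeping the tensor-index bookkeeping straight when taking the partial transpose and isolating the $2\times 2$ blocks; conceptually the argument is routine, and I do not anticipate any genuine obstacle beyond this computation.
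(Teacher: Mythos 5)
Your proposal is correct, but your proof of the key implication $(1)\Rightarrow(3)$ takes a genuinely different route from the paper's. The paper stays inside its multiplicative-domain framework: it quotes the characterization (from \cite{levick}) showing that $\mathcal{M}_{T_b}$ contains the diagonal algebra, writes $E_{ij}=E_{ii}E_{ij}E_{jj}$ for $i\neq j$, and invokes Corollary \ref{lemma:coCP_md_struct} (itself a consequence of the structure theorem, Theorem \ref{Thm-mult-abel-PPT}) to conclude $T_b(E_{ij})=b_{ij}E_{ij}=0$, hence $b=1$. You instead compute the Choi matrix $C_{T_b}=\sum_{i,j}b_{ij}E_{ij}\otimes E_{ij}$ and its partial transpose directly, and observe that $C_{T_b}^{\Gamma}$ is block diagonal with respect to the decomposition of $\C^d\otimes\C^d$ into the spans $\{e_i\otimes e_i\}$ and $\{e_i\otimes e_j,\,e_j\otimes e_i\}$ ($i<j$), the off-diagonal blocks being $\bigl(\begin{smallmatrix}0 & b_{ij}\\ \overline{b_{ij}} & 0\end{smallmatrix}\bigr)$ with eigenvalues $\pm|b_{ij}|$, so positivity forces $b_{ij}=0$ for $i\neq j$; unitality gives $b_{ii}=1$. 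This computation is valid (one bookkeeping remark: the paper's definition of PPT, via co-complete positivity, corresponds to positivity of the partial transpose on the \emph{first} tensor leg, but that matrix is the transpose of yours, so the two positivity conditions are equivalent). What each approach buys: yours is elementary and self-contained, needing no input beyond the definition of the Choi matrix and partial transposition, and it exhibits quantitatively why PPT fails for any nondiagonal $b$; the paper's proof depends on the external characterization of $\mathcal{M}_{T_b}$ and on the structure theory of Section \ref{sec:Structure of PPT Maps}, but it showcases that machinery and keeps the argument aligned with the paper's central theme. Your $(3)\Rightarrow(2)$, writing $T_1(x)=\sum_i\Tr(xE_{ii})E_{ii}$ in the Holevo form of Theorem \ref{Thm-Horodecki}, is a harmless variant of the paper's rank-one Kraus operator argument, and $(2)\Rightarrow(1)$ is the same in both.
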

\begin{proof}
	$(2\Rightarrow 1)$ is well-known. For $(1\Rightarrow 3)$, a characterization of the multiplicative domain of Schur channels has been put forward in \cite{levick}. It is given below 
	 \[x=(x_{i,j})\in \mathcal{M}_{T_b} \Leftrightarrow x_{i,j}\neq 0, \ \rm{whenever} \ |b_{i,j}|=1 .\] Hence it is clear that the multiplicative domain of $T_b$ contains  the algebra of diagonal matrices of $\Md$. 
 Since the matrix units $E_{ii}\in \mathcal{M}_{T_b}$ for all $i$ and noting that $E_{ij}=E_{ii}E_{ij}E_{jj}$ we get $T_b(E_{ij})=0$ by Corollary  \ref{lemma:coCP_md_struct}. Thus $b=1$ .\par
	To prove $(3\Rightarrow 2)$, if $b=1$, then $T_b$ can be written as $T_b(x)=\sum_{i=1}^dE_{ii}xE_{ii}$. Since these are rank-1 operators, $T_b$ is entanglement breaking by one of the equivalent criteria given in \cite{entng-brkng}. Hence, $\UCP$ is also entanglement breaking.
\end{proof}
The above proposition suggests that there are no non-trivial Schur channels that are PPT, a fact which was proved in \cite{K-M-P} using different method.
\begin{ex}[Non- PPT Maps having finite index of Separability]\label{ex1}
Let
\[b=\begin{pmatrix}1&1&0\\1&1&0\\0&0&1\end{pmatrix},\, u=\begin{pmatrix}0&1&0\\0&0&1\\1&0&0\end{pmatrix},\]
and set $\UCP=\style{U}\circ T_b$, where $\style{U}(x)=uxu^*$, for all $x$. Hence $\UCP$ is a unital channel.
One computes 
\[\Me=\leftset\begin{pmatrix} a_{11}&a_{12}&0\\a_{21}&a_{22}&0\\0&0&a_{33}
\end{pmatrix}\midsetl a_{ij}\in\C\rightset.\]
Clearly it is a non-abelian subalgebra of $\mathcal{M}_3$ and hence $\UCP$ can not be PPT by the Theorem \ref{Thm-mult-abel-PPT}.
But we have that
\[\UCP\begin{pmatrix} x_{11}&x_{12}&x_{13}\\x_{21}&x_{22}&x_{23}\\x_{31}&x_{32}&x_{33}\end{pmatrix}=\begin{pmatrix}x_{33}&0&0\\0&x_{11}&x_{12}\\0&x_{21}&x_{22}\end{pmatrix}.\]
which, by Lemma \ref{lem:md_composition}, gives $\mathcal{M}_{\UCP^2}=D_3$, the algebra of diagonal matrices. Further, we observe that $\UCP^2(\mathcal{M}_3)=D_3$, that is the range of $\UCP^2$ is contained inside the abelian C$^*$-algebra $D_3$, hence by Lemma \ref{abelian-range}, $\UCP^2$ must be entanglement breaking. Thus, $\UCP$ is not PPT, but $\UCP^2$ is entanglement breaking.\par
We can extend this example to higher dimensions: Set $b=\begin{pmatrix} J_{d-1}&0\\0&1\end{pmatrix}$ (where $J_{d-1}$ is the $d-1\times d-1$ matrix of all ones), and set $u$ to be the permutation corresponding to $(123\cdots d)$ in the symmetric group on $\{1,2,\cdots,d\}$. Then setting $\UCP=\style{U}\circ T_b$, we see that $\UCP$ is not PPT, and neither is $\UCP^n$ for $1\leq n\leq d-2$, but $\UCP^{d-1}$ will be entanglement breaking. 
\end{ex}
Note that above example suggests that the converse of the PPT conjecture is false, that is, if $\UCP$ is channel such that $\UCP^2$ is entanglement breaking, then $\UCP$ need not be PPT.

\section{Asymptotically Entanglement Breaking}\label{sec:Asymptotically Entanglement Breaking}
In this section we show the existence of maps that don't have finite index of separability, but asymptotically they are entanglement breaking.  
\begin{definition}We call a linear map $\UCP:\Md\rightarrow\Md$ `asymptotically entanglement breaking' if there is a sequence $\{n_j\}_{j}^{\infty}\subset \mathbb{N}$ such that $\displaystyle\lim_{j\to\infty} \UCP^{n_j}$ is entanglement breaking, where the limit is in the bounded weak (BW) topology.
\end{definition}
In a very recent article \cite{K-M-P}, it was shown that PPT maps are asymptotically entanglement breaking. Indeed, they showed that a limit point of iterates of a PPT map becomes entanglement breaking. This could also be deduced from the work of Lami and Giovannetti \cite{Lami-G}.
In this section we show that there are plenty of channels that are asymptotically entanglement breaking. 
\begin{theorem}\label{asympt-entgl}
Let $\UCP:\Md\rightarrow\Md$ be a unital channel. Then $\UCP$ is asymptotically entanglement breaking map if and only if the stabilized multiplicative domain $\mathcal{M}_{\UCP^\infty}$ is abelian. 
\end{theorem}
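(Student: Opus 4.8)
The plan is to reduce the statement to the asymptotic structure of $\UCP$ on $\Md$, viewed as a $d^2$-dimensional vector space, and to the fact that limits of iterates are themselves unital channels. First I would assemble the structural backbone. Since $\UCP$ is a trace-preserving unital channel it is power-bounded ($\|\UCP^n\|=1$), so its eigenvalues of modulus $1$ are semisimple, giving a $\UCP$-invariant splitting $\Md=V_{\mathrm{per}}\oplus V_{\mathrm{tr}}$, where $V_{\mathrm{per}}$ is spanned by the peripheral eigenvectors and $\UCP^n\to 0$ on $V_{\mathrm{tr}}$. I would then show $V_{\mathrm{per}}=\mathcal{M}_{\UCP^\infty}$: any $a$ with $\UCP(a)=\lambda a$, $|\lambda|=1$, satisfies $\UCP^n(a)=\lambda^n a$, so by the Kadison--Schwarz fact recalled before Theorem \ref{thm:wolf} it lies in every $\mathcal{M}_{\UCP^n}$, hence in $\mathcal{M}_{\UCP^\infty}$; conversely $\mathcal{M}_{\UCP^\infty}$ is $\UCP$-invariant (Lemma \ref{lem:md_composition}) and $\UCP$ restricts there to a trace-preserving unital $*$-automorphism $\alpha$ (injectivity from faithfulness of a trace-preserving channel, multiplicativity from $\mathcal{M}_{\UCP^\infty}\subseteq\Me$), which is unitary for the Hilbert--Schmidt inner product and therefore diagonalizable with unimodular eigenvalues, so $\mathcal{M}_{\UCP^\infty}\subseteq V_{\mathrm{per}}$.

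For the forward direction, suppose $\UCP$ is asymptotically entanglement breaking, so some subsequence $\UCP^{n_j}$ converges (in finite dimensions the BW topology is the norm topology, and the unital channels form a compact set) to an entanglement breaking unital channel $\Phi$. On $\mathcal{M}_{\UCP^\infty}$ we have $\Phi=\lim_j\alpha^{n_j}$, a limit of $*$-automorphisms in the compact automorphism group of the finite-dimensional algebra $\mathcal{M}_{\UCP^\infty}$, hence itself a $*$-automorphism; by Choi's characterization of the multiplicative domain this yields $\mathcal{M}_{\UCP^\infty}\subseteq\mathcal{M}_\Phi$. Since $\Phi$ is an entanglement breaking unital channel, Theorem \ref{stormer} (St\o rmer) forces $\mathcal{M}_\Phi$ to be abelian, and therefore $\mathcal{M}_{\UCP^\infty}$ is abelian.

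For the converse, assume $\mathcal{M}_{\UCP^\infty}$ is abelian. The key is to realize a conditional expectation onto $\mathcal{M}_{\UCP^\infty}$ as a genuine limit of iterates. Because $\alpha$ generates a relatively compact cyclic semigroup inside the compact automorphism group, the identity is a subsequential limit of its powers, so I can choose $n_j\to\infty$ with $\alpha^{n_j}\to\mathrm{id}$; passing to a further subsequence, $\UCP^{n_j}\to\Phi$ on all of $\Md$ by compactness. Reading off $\Phi$ on the two summands—identity on $\mathcal{M}_{\UCP^\infty}=V_{\mathrm{per}}$ and zero on $V_{\mathrm{tr}}$—identifies $\Phi$ with the idempotent projection onto $\mathcal{M}_{\UCP^\infty}$ along $V_{\mathrm{tr}}$, whose range is exactly $\mathcal{M}_{\UCP^\infty}$. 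As this range is contained in an abelian C$^*$-algebra, Lemma \ref{abelian-range} shows $\Phi$ is entanglement breaking, so $\UCP$ is asymptotically entanglement breaking.

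I expect the main obstacle to be the structural step $V_{\mathrm{per}}=\mathcal{M}_{\UCP^\infty}$ together with the recurrence argument that $\mathrm{id}$ is a subsequential limit of $\{\alpha^n\}$; it is this combination—decay on the transient subspace plus near-recurrence of the automorphism part—that lets one produce the conditional expectation onto $\mathcal{M}_{\UCP^\infty}$ as an honest limit rather than merely a Ces\`aro average. A secondary point requiring care is the closedness of the set of $*$-automorphisms of a finite-dimensional C$^*$-algebra, used in the forward direction to guarantee that $\Phi$ is multiplicative on $\mathcal{M}_{\UCP^\infty}$.
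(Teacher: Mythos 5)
Your proof is correct, and while it shares the paper's skeleton (a conditional expectation onto the peripheral part arising as a limit of iterates), it takes a genuinely more self-contained route. The paper gets the limiting conditional expectation $P$, with range equal to the span of the peripheral eigenvectors, by invoking Kuperberg's theorem \cite{kup}, and identifies that span's algebra with $\mathcal{M}_{\UCP^\infty}$ by citing Theorem 2.5 of \cite{miza}; you re-derive both facts in finite dimensions: power-boundedness gives semisimplicity of the peripheral eigenvalues and the splitting $\Md=V_{\mathrm{per}}\oplus V_{\mathrm{tr}}$, the restriction $\alpha$ of $\UCP$ to $\mathcal{M}_{\UCP^\infty}$ is a trace-preserving $*$-automorphism and hence a Hilbert--Schmidt unitary, and Dirichlet-type recurrence produces $n_j\to\infty$ with $\alpha^{n_j}\to\mathrm{id}$, so that $\UCP^{n_j}$ converges to the projection onto $V_{\mathrm{per}}$ along $V_{\mathrm{tr}}$. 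As a by-product you prove the sharper statement that the span of the peripheral eigenvectors \emph{equals} $\mathcal{M}_{\UCP^\infty}$ rather than merely generating it. The ``asymptotically entanglement breaking implies abelian'' direction is where you genuinely diverge: the paper runs a commutation/approximation argument (showing $\lim_i\|P-\UCP^{n_i}\circ Q\|=0$ and using closedness of the entanglement breaking set) to conclude that the idempotent $P$ is entanglement breaking, whereas you observe that multiplicativity passes to pointwise limits, so \emph{any} entanglement breaking limit $\Phi$ of iterates satisfies $\mathcal{M}_{\UCP^\infty}\subseteq\mathcal{M}_\Phi$, and St\o rmer's Theorem \ref{stormer} finishes; this avoids any appeal to idempotence of the limit, and incidentally supplies the step the paper attributes to Lemma \ref{abelian-range}, where what is really needed is Theorem \ref{stormer}. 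What each approach buys: the paper's is shorter given the cited literature, and Kuperberg's theorem is not restricted to finite dimensions; yours is elementary, exhibits the subsequence explicitly, and needs nothing beyond Theorem \ref{stormer} and Lemma \ref{abelian-range}. Two small economies you could make: in the forward direction you do not need compactness or closedness of the automorphism group of $\mathcal{M}_{\UCP^\infty}$, since a limit of multiplicative maps is multiplicative, which is all that is used; and in the converse no passage to a further subsequence is required, because once $\alpha^{n_j}\to\mathrm{id}$ with $n_j\to\infty$ the maps $\UCP^{n_j}$ already converge on both summands of $\Md=V_{\mathrm{per}}\oplus V_{\mathrm{tr}}$.
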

\begin{proof}
Note that Kuperberg in \cite{kup} proved that for a unital completely positive map $\UCP$, there is subsequence $n_1,n_2. \cdots,$ such that $\displaystyle\lim_{j\to\infty}\UCP^{n_j}$ converges to a unique conditional expectation $P:\Md\rightarrow\Md$ such that $P$ is completely positive and also 

\[\rm{Range}(P)=\rm{Span}\{a\in \Md: \UCP(a)=\lambda a, \ |\lambda|=1\},\] that is, the range is the span of all the peripheral eigen operators of $\UCP$.  From the Theorem 2.5 in \cite{miza}, we get $\mathcal{M}_{\UCP^\infty}$ is the algebra generated by the set $\rm{Span}\{a\in \Md: \UCP(a)=\lambda a, \ |\lambda|=1\}$. Hence by the hypothesis, $\rm{Range}(P)$ is contained in an abelian C$^*$-algebra. Hence by the Lemma \ref{abelian-range}, we have $P$ is entanglement breaking. Thus $\UCP$ is asymptotically entanglement breaking.

Conversely, let there exist a subsequence of $m_1<m_2<\cdots$ such that \[\lim_{k\to\infty} \UCP^{m_k}=Q,\] then it follows that the idempotent $Q$ and $\UCP^n$ commutes for every $n\geq 1$. It is evident that \[\lim_{n}\|\UCP^n-\UCP^n\circ Q\|=0.\]
Now choosing particularly the subsequence $n_1<n_2<\cdots$ for which we get the conditional expectation $P$ onto the subalgebra $\M_{\UCP^\infty}$, we get 
\[\lim_{j}\|\UCP^{n_j}-\UCP^{n_j}\circ Q\|=0.\]
Now passing to any subsequence of the sequence $n_1<n_2<\cdots$ we may conclude that 
\[\lim_{i}\|P-\UCP^{n_i}\circ Q\|=0.\]

As $Q$ is entanglement breaking and composing it with any 
completely positive map yields another entanglement breaking map, it is immediate that $P$ is entanglement breaking since the set of entanglement breaking channel is a closed set. Hence by Lemma \ref{abelian-range} we get $\M_{\UCP^\infty}$ is abelian.
 
\end{proof}

\begin{remark}
If any of the limit points of the set $\displaystyle\{\UCP^n\}_{n=1}^{\infty}$ are entanglement breaking, then the above proof technique can be used to show that any other limit point of the set $\{\UCP^n\}$ will be a limit of entanglement breaking maps, which is again entanglement breaking.
This fact was first proved in \cite{Lami-G}, Proposition 20. Note that Lami-Giovannetti in \cite{Lami-G} proved Theorem 24 which is essentially equivalent to Theorem \ref{asympt-entgl}. Indeed, they prove that a channel $\UCP$ is asymptotically entanglement saving if and only if the stabilized multiplicative domain (of $\UCP^*$) is non-abelian.    
\end{remark}

We next demonstrate a large class of maps that have the property mentioned in the above theorem.
\begin{definition}
A positive linear map $\UCP$ is called irreducible if $\UCP(p)\leq \lambda p$ holds for a projection $p$ and $\lambda>0$, then $p\in \{0,1\}$, that is, $p$ must be a trivial projection.
\end{definition}
Irreducible maps are generic in the sense that these maps are dense in the set of all positive linear maps acting on $\Md$.
\begin{cor}\label{cor:irreducible}
Every unital irreducible channel $\UCP:\Md\rightarrow\Md$  is asymptotically  entanglement breaking.
\end{cor}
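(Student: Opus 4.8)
The plan is to reduce the statement to Theorem~\ref{asympt-entgl} and then to show that irreducibility forces the stabilized multiplicative domain $\mathcal{M}_{\UCP^\infty}$ to be abelian. Recall from the proof of Theorem~\ref{asympt-entgl} (via \cite{miza}) that $\mathcal{M}_{\UCP^\infty}$ is the algebra generated by the peripheral eigenvectors of $\UCP$. Since every peripheral eigenvector lies in $\Me$, and products of peripheral eigenvectors are again peripheral eigenvectors (if $\UCP(a)=\lambda a$ with $a\in\Me$ and $\UCP(b)=\mu b$, then $\UCP(ab)=\UCP(a)\UCP(b)=\lambda\mu\,ab$), the restriction $\alpha:=\UCP|_{\mathcal{M}_{\UCP^\infty}}$ is a $*$-automorphism: it scales each peripheral eigenvector by a unimodular constant, hence is bijective, multiplicative, and $*$-preserving on $\mathcal{M}_{\UCP^\infty}$. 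Thus it suffices to prove that $\mathcal{M}_{\UCP^\infty}$ has no block of size $\ge 2$ in its Wedderburn decomposition.

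First I would observe that any $\alpha$-invariant projection $E\in\mathcal{M}_{\UCP^\infty}$ satisfies $\UCP(E)=E\le 1\cdot E$, so irreducibility forces $E\in\{0,1\}$. Writing $\mathcal{M}_{\UCP^\infty}\cong\bigoplus_{i=1}^s M_{n_i}$ with minimal central projections $z_1,\dots,z_s$, the automorphism $\alpha$ permutes the $z_i$; summing the $z_i$ over any $\alpha$-orbit yields an $\alpha$-invariant projection, which must therefore equal $1$. Hence $\alpha$ acts transitively on $\{z_1,\dots,z_s\}$, so it permutes the blocks as a single $s$-cycle, all $n_i$ equal a common value $n$, and $\alpha^s$ fixes each $z_i$.

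Now suppose for contradiction that $n\ge 2$. Since $\alpha^s$ restricts to an automorphism of the block $z_1\mathcal{M}_{\UCP^\infty}z_1\cong M_n$, it is inner, say $\alpha^s=\mathrm{Ad}(w)$ there for a unitary $w\in M_n$; choosing $e$ to be the rank-one projection onto an eigenvector of $w$ gives $\alpha^s(e)=e$. Setting $E=\sum_{j=0}^{s-1}\alpha^j(e)$, the summands lie in distinct blocks (because $\alpha$ is an $s$-cycle on blocks) and hence are orthogonal, and $\alpha(E)=E$ because $\alpha^s(e)=e$. Thus $E$ is an $\alpha$-invariant projection that is nonzero but strictly smaller than $1$, since its complement $1-E=\sum_{i}\bigl(z_i-\alpha^{i-1}(e)\bigr)$ is nonzero (each $z_i-\alpha^{i-1}(e)$ has abstract rank $n-1\ge 1$, hence is nonzero in $\Md$ as every block has positive multiplicity). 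This contradicts irreducibility, so $n=1$ and $\mathcal{M}_{\UCP^\infty}$ is abelian; Theorem~\ref{asympt-entgl} then finishes the proof.

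The main obstacle I anticipate is justifying cleanly that $\UCP$ restricts to a genuine $*$-automorphism of $\mathcal{M}_{\UCP^\infty}$ and that $\alpha^s$ is implemented by conjugation within a single block; both are standard but require care about how the abstract Wedderburn decomposition sits inside $\Md$ (in particular that a nonzero abstract projection stays nonzero in $\Md$, which holds because each block has positive multiplicity). An alternative, less self-contained route would be to invoke the Perron--Frobenius theory of irreducible positive maps (Evans--H\o{}egh-Krohn), which directly yields that the peripheral eigenvectors are powers of a single unitary and hence generate an abelian algebra.
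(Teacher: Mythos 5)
Your proof is correct, but it takes a genuinely different route from the paper. The paper's proof is essentially a two-line citation: by Perron--Frobenius theory for irreducible positive maps (Evans--H\o{}egh-Krohn), the peripheral eigenvectors of an irreducible unital channel are the powers of a single unitary, so $\mathcal{M}_{\UCP^\infty}$ is abelian (this is Lemma 3.4 of \cite{miza}), and Theorem \ref{asympt-entgl} finishes. You instead prove the abelianness of $\mathcal{M}_{\UCP^\infty}$ from first principles: using the characterization of $\mathcal{M}_{\UCP^\infty}$ as the algebra generated by peripheral eigenvectors (the same fact from \cite{miza} that the paper already invokes in proving Theorem \ref{asympt-entgl}), you note that $\UCP$ restricts to a $*$-automorphism $\alpha$ there, that irreducibility forbids nontrivial $\alpha$-invariant projections, and then a Wedderburn/Skolem--Noether argument manufactures such a projection from any block of size $n\ge 2$: transitivity of $\alpha$ on the minimal central projections forces a single $s$-cycle with equal block sizes, $\alpha^s$ is inner on a block, and summing the $\alpha$-orbit of a rank-one spectral projection of the implementing unitary gives a nonzero invariant projection with nonzero complement --- a contradiction. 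Each step checks out (in particular, $\UCP(E)=E\le E$ does trigger the paper's definition of irreducibility, and the orbit sum is a projection because the summands sit in distinct blocks). What the two approaches buy: the paper's is shorter and yields strictly more structure (the peripheral algebra is singly generated by a unitary, hence isomorphic to a cyclic group algebra), while yours is self-contained modulo standard finite-dimensional C$^*$-algebra facts and avoids importing the Perron--Frobenius machinery --- indeed you correctly identify the paper's argument as the ``alternative, less self-contained route'' in your closing remark.
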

\begin{proof}
Note that by the aid of Perron-Frobenius theory of irreducible positive maps, we know that (see \cite{evans-krohn}) the peripheral eigen operators of an irreducible channel are generated by a single unitary. It is proved in the Lemma 3.4 in \cite{miza} that for a unital irreducible channel $\UCP$, we have 
$\mathcal{M}_{\UCP^\infty}$ is an abelian C$^*$-algebra. Hence by the Theorem \ref{asympt-entgl} we have the result.
\end{proof}
We end this section by providing an example of a channel $\UCP$ which is asymptotically entanglement breaking but does not have finite index of separability, ensuring that $\mathcal{M}_{\UCP^\infty}$ being abelian is not sufficient for $\UCP$ to have finite index of separability.
\begin{ex}\label{ex2}
Let $\UCP:\mathcal{M}_2\rightarrow\mathcal{M}_2$ be defined as a Schur product channel $\UCP(x)=b\circ x$, where $b=\begin{pmatrix}
1 & \lambda\\
\lambda & 1\\
\end{pmatrix}$, with $0<\lambda<1$.
It follows that the stabilized multiplicative domain  $\mathcal{M}_{\UCP^\infty}=\Big\{\begin{pmatrix}
c_1 & 0\\
0  & c_2\\
\end{pmatrix}| c_1,c_2\in \mathbb{C}\Big\}$, which is clearly abelian. However following Proposition \ref{Prop:Schur-characterize}, it is evident that $\UCP^n$ is not entanglement breaking for any $n\geq 1$ as $\UCP^n(x)=b_n\circ x$, where $b_n=\begin{pmatrix}
1 & \lambda^n\\
\lambda^n & 1\\
\end{pmatrix}$.   

\end{ex} 
\begin{remark} 
Note that since the irreducible channels are dense in the set of all quantum channels, Corollary \ref{cor:irreducible} ensures that the set of unital channels that are asymptotically entanglement breaking is also a dense subset of the unital channels. This result, combined with Theorem \ref{thm-denseness}, demonstrates the richness of the class of eventually entanglement breaking maps.
\end{remark}

\section{Discussion}\label{sec:PPT2}
The requirement of unitality of PPT channels can be relaxed in some cases if some properties of the adjoint map are exploited. Note that our method guarantees the existence of a number $n$ for a unital PPT channel $\UCP$ on $\Md$ such that $\UCP^n$ s entanglement breaking. However, a uniform bound could not be found. A uniform upper bound for the multiplicative index may provide an upper bound for this number. In a recent article \cite{Sam-Miza}, Theorem 3.8, such a bound for multiplicative index of a channel was proposed. 

Also analyzing the structure of PPT maps in Section \ref{sec:Structure of PPT Maps}, it can be realized that to prove the PPT-squared Conjecture (\ref{conj-PPT}) it is enough to prove the same for unital and trace preserving PPT maps which have trivial multiplicative domain.  
\section{Acknowledgement}
We would like to thank the referee for pointing out a simplified proof of Theorem \ref{thm-denseness}.  
M.R is supported by a Post Doctoral Fellowship of the department of Pure Mathematics, University of Waterloo and Institute of Quantum Computing, S.J is supported by a Graduate Research Fellowship of the department of Combinatorics and Optimization and Institute of Quantum Computing and V.P is supported by NSERC Grant Number 03784.  
\bibliography{EEB1}
\bibliographystyle{amsplain}

\end{document}